\newtheorem{thm}{Theorem}
\newcommand{\load}{\mathbf\Lambda}
\newcommand{\ident}{\mathbf{I}}
\newcommand{\vecx}{\mathbf{x}}
\newcommand{\matc}{\mathbf{C}}
\newcommand{\vecb}{\mathbf{b}}
\newcommand{\vecX}{\mathbf{X}}
\newcommand{\vecE}{\mathbf{E}}
\newcommand{\vecd}{\mathbf{d}}
\newcommand{\veca}{\mathbf{a}}
\newcommand{\vecM}{\mathbf{M}}
\newcommand{\vecH}{\mathbf{H}}
\newcommand{\vecD}{\mathbf{D}}
\newcommand{\vecY}{\mathbf{Y}}
\newcommand{\vecU}{\mathbf{U}}
\newcommand{\vecK}{\mathbf{K}}
\newcommand{\vecr}{\mathbf{r}}
\newcommand{\vecS}{\mathbf{S}}
\newcommand{\vecz}{\mathbf{z}}
\newcommand{\vecy}{\mathbf{y}}
\newcommand{\vecZ}{\mathbf{Z}}
\newcommand{\vecu}{\mathbf{u}}
\newcommand{\zero}{\mathbf{0}}
\newcommand{\veclambda}{\mbox{\boldmath$\lambda$}}
\newcommand{\new}{\mbox{\tiny new}}
\newcommand{\vecmu}{\mbox{\boldmath$\mu$}}
\newcommand{\vecLambda}{\mbox{\boldmath$\Lambda$}}
\newcommand{\vecSigma}{\mbox{\boldmath$\Sigma$}}
\newcommand{\vecPhi}{\mbox{\boldmath$\Phi$}}
\newcommand{\vecTheta}{\mbox{\boldmath$\Theta$}}
\newcommand{\vecDelta}{\mbox{\boldmath$\Delta$}}
\newcommand{\vecOmega}{\mbox{\boldmath$\Omega$}}
\newcommand{\vectheta}{\mbox{\boldmath$\theta$}}
\newcommand{\vecepsilon}{\mbox{\boldmath$\epsilon$}}
\newcommand{\vecdelta}{\delta}
\newcommand{\vecalpha}{\mbox{\boldmath$\alpha$}}
\newcommand{\vecvarthet}{\mbox{\boldmath$\vartheta$}}
\newcommand{\matsig}{\mathbf{\Sigma}}
\newtheorem{cor}{Corollary}
\newtheorem{prop}{Proposition}
\newtheorem{defn}{Definition}
\title{Hidden Truncation Hyperbolic Distributions, Finite Mixtures Thereof, and Their Application for Clustering}  
  \author{Paula M.\ Murray$^{*}$, Ryan P.\ Browne$^{**}$ and Paul D.\ McNicholas$^{*}$}
  \date{\small $^{*}$Department of Mathematics and Statistics, McMaster University, Ontario, Canada.\\
 $^{**}$Department of Statistics and Actuarial Sciences, University of Waterloo, Ontario, Canada.}
\begin{document}

\maketitle 

\begin{abstract}
A hidden truncation hyperbolic (HTH) distribution is introduced and finite mixtures thereof are applied for clustering. A stochastic representation of the HTH distribution is given and a density is derived. A hierarchical representation is described, which aids in parameter estimation. Finite mixtures of HTH distributions are presented and their identifiability is proved. The convexity of the HTH distribution is discussed, which is important in clustering applications, and some theoretical results in this direction are presented. The relationship between the HTH distribution and other skewed distributions in the literature is discussed. Illustrations are provided --- both of the HTH distribution and application of finite mixtures thereof for clustering.\\

\noindent \textbf{Keywords}: clustering; hidden truncation; hyperbolic distribution; mixture models.
\end{abstract}

\section{Introduction}

Broadly, cluster analysis is the process of identifying groups of similar observations within a data set. Model-based clustering is a method for performing cluster analysis that involves the fitting of a finite mixture model. Recently, attention has focused on mixtures of flexible asymmetric distributions and, to this end, we develop a hidden truncation hyperbolic (HTH) distribution and use a mixture thereof to perform flexible model-based clustering. This mixture model is more general than other mixture models based on truncated distributions that have appeared in the literature. A finite mixture model has density 
\begin{equation}
f(\vecx~|~\vecvarthet)=\sum_{g=1}^{G}\pi_g f_g(\vecx~|~\vectheta_g),
\label{eq:generalmixture1}
\end{equation}
such that 
$\vecvarthet=(\pi_1,...,\pi_G,\vectheta_1,...,\vectheta_G)$, $\pi_g>0$, 
and $\sum_{g=1}^{G}\pi_g=1,$ where $\vecx$ is a $p$-dimensional data vector, $\pi_g$ is the $g$th mixing proportion, and $f_g(\vecx~|~\vectheta_g)$ is the $g$th component density.
%

Mixture models have been employed for decades in cluster analysis \citep[e.g.,][]{day69,wolfe63,wolfe65} and, in fact, \cite{mcnicholas16a} traces the association of mixture models with clustering back to \cite{tiedeman55}. The Gaussian mixture model has been the most popular approach and work in this direction has focused on approaches that reduce the number of free parameters in the component covariance matrices \citep[e.g.,][]{banfield93,celeux95,mcnicholas08}. 
%
Over the past fifteen years or so, these Gaussian mixture modelling approaches have been extended in various ways to accommodate increasingly complicated applications. Many of these extensions involve using non-Gaussian distributions to accommodate component skewness and/or concentration \citep[e.g.,][]{franczak14,lee14,lin07,smcnicholas17,montanari10,murray14a}. 
Interestingly, beyond certain formulations of the skew-normal and skew-$t$ distributions \citep[e.g.,][]{lee14,lin09,lin10}, models based on truncated distributions have remained relatively unexplored in the clustering literature. A recent review of model-based clustering is given by \cite{mcnicholas16b}.

The mixture of HTH distributions developed herein is based on a truncated hyperbolic random variable and contains both mixtures of certain formulations of mixtures of skew-$t$ and mixtures of skew-normal distributions as special cases. 
Some background on hidden truncation is provided in Section~\ref{sec:back} and the HTH distribution is developed in Section \ref{sec:HTH}, along with a derivation of the moments of the truncated symmetric hyperbolic distribution. Then, identifiability of finite mixtures of HTH distributions is proved (Section~\ref{sec:ident}) and convexity is discussed (Section~\ref{sec:concave}). An expectation-conditional maximization (ECM) algorithm is used for fitting mixtures of HTH distributions (Section~\ref{sec:param}), and clustering performance is illustrated on both real and simulated data (Section~\ref{sec:results}).

\section{Background}\label{sec:back}


\subsection{Generalized Inverse Gaussian Distribution}\label{sec:gig}
To formulate the HTH distribution, introduce a latent variable $W$ following the generalized inverse Gaussian (GIG) distribution. The random variable $W$ has density
$${\color{black}f_{\text{GIG}}}(w\mid\psi,\chi,\lambda)=\frac{(\psi/\chi)^{\lambda/2}w^{\lambda-1}}{2K_\lambda(\sqrt{\psi\chi})}\exp{\bigg(-\frac{\psi w + \chi/w}{2}\bigg)},$$
 for $w>0$ with $(\psi, \chi) \in \mathbb{R}_{+}^2$ and $\lambda\in\mathbb{R}$. We write $W\sim\text{GIG}(\psi,\chi,\lambda)$ to indicate that a random variable $W$ follows a GIG distribution.

\subsection{Generalized Hyperbolic Distribution}\label{sec:ghd}

The generalized hyperbolic distribution \citep[GHD;][]{mcneil05} has density
\begin{equation}
\begin{split}
f_{\text{GHD}}(\vecx\mid\vectheta)=
&\bigg\{\frac{\chi+(\vecx-\vecmu)^{\top}\vecSigma^{-1}(\vecx-\vecmu)}{\psi+\vecalpha^{\top}\vecSigma^{-1}\vecalpha}\bigg\}^{(\lambda-p/2)/2}
\\&\quad\times\frac{(\psi/\chi)^{\lambda/2}K_{\lambda-p/2}\bigg(\sqrt{(\psi+\vecalpha\vecSigma^{-1}\vecalpha)\big\{\chi+(\vecx-\vecmu)^{\top}\vecSigma^{-1}(\vecx-\vecmu)\big\}}\bigg)}{(2\pi)^{p/2}\mid\mathbf\Sigma\mid^{1/2}K_{\lambda}(\sqrt{\chi\psi}) \; \mbox{exp} \left\{  - (\vecx-\vecmu)^{\top}\vecSigma^{-1}\vecalpha \right\} },
\label{eq:ghddensity}
\end{split}
\end{equation}
where $\vecx$ is a $p$-dimensional data vector, $K_\lambda(\cdot)$ is the modified Bessel function of the third kind with index $\lambda$, and  $\vectheta=(\vecmu, \vecSigma, \vecalpha,\lambda, \chi, \psi)$ is a vector of parameters. This density function contains a $p$-dimensional skewness parameter $\vecalpha$. The GHD is mean-variance mixture and has a variety of limiting and special cases \citep[see][]{mcneil05}.
Setting $\lambda =1$ gives a multivariate generalized hyperbolic distribution such that its univariate margins are one-dimensional hyperbolic distributions. Fixing $\lambda = (p+1)/2$ gives a $p$-dimensional hyperbolic distribution, and $\lambda = -{1}/{2}$ gives the inverse-Gaussian distribution. 
For $\lambda >0$ and $\chi \rightarrow 0$, we have a limiting case known as the variance-gamma distribution \citep{barndorff78}. If $\lambda =1 $, $\psi= 2$ and $\chi\rightarrow 0$, we get the asymmetric Laplace distribution \citep[see][]{kotz2001} and if $\vecalpha = \mathbf{0}$, we have the symmetric hyperbolic distribution \citep{barndorff78}. Other special and limiting cases include the multivariate normal-inverse Gaussian (MNIG)  \citep{karlis07}, a skew-$t$  \citep{demarta2005}, the multivariate-$t$ and the Gaussian distribution. 

\cite{barndorff78} introduced generalized hyperbolic distributions to model the grain sizes of sand and diamonds. \cite{eberlein1995} used the hyperbolic distribution to model returns of German equities. \citep{browne13, tortora15,morris16} applied the GHD to the context of  model-based clustering.

The GHD given in \eqref{eq:ghddensity} is not identifiable because for any $\eta >0 $ we have that the parameters  $\vectheta=(\vecmu, \vecSigma, \vecalpha,\lambda, \chi, \psi)$  and $\vectheta^*=(\vecmu, \eta\vecSigma, \eta\vecalpha,\lambda, \chi/\eta, \eta\psi)$ yield the same density. 
To ensure identifiability, we follow \cite{browne13} and set $\chi=\omega$ and $\psi=\omega$ such that  $W\sim\text{GIG}(\omega,\omega,\lambda)$. Using this parameterization and setting the skewness to zero (i.e.,\ $\vecalpha=\mathbf{0}$) we obtain an identifiable symmetric hyperbolic distribution with density
\begin{equation}
h(\vecx\mid\vecmu,\vecSigma,\lambda,\omega,\omega)= \bigg\{\frac{\omega+\delta(\vecx,\vecmu|\vecSigma)}{\omega}\bigg\}^{(\lambda-p/2)/2} \frac{K_{\lambda-p/2}\big(\sqrt{\omega\{\omega+\delta(\vecx,\vecmu|\vecSigma)\}}\big)}{(2\pi)^{p/2}\mid\mathbf\Sigma\mid^{1/2}K_{\lambda}(\omega)}.
\label{eq:ghddensity2}
\end{equation}
where $\delta(\vecx|\vecmu,\vecSigma)=(\vecx-\vecmu)^{\top}\vecSigma^{-1}(\vecx-\vecmu)$. The symmetric hyperbolic distribution has stochastic representation $$\vecX = \vecmu + \sqrt{W} \vecSigma^{1/2}  \vecK,$$  where $W \sim$ GIG$(\omega, \omega, \lambda)$ and { \color{black}$\vecK \sim\mathcal{N}_p(\mathbf{0}, \ident_p)$.}

\subsection{The Skew-Normal Distribution}

In the past several years, much work has been done on mixture modeling using asymmetric distributions. One form of the skew-normal distribution that has appeared often in this literature is the well-known skew-normal distribution of \cite{sahu03}  which has the characterization
 \begin{equation}
\vecX=\vecmu+\vecLambda|\vecZ_0|+\vecZ_1,
\label{sahu}
\end{equation}
where $|\vecZ| = ( | Z_1 |, \ldots, |Z_p|)$ for $\vecZ \in \mathbb{R}^p $,  $\vecX$ is a $p$-variate skew-normal random variable, $\vecLambda$ is a diagonal matrix that plays the role of a skewness parameter, and 
$$\left[ \begin{array}{c} \vecZ_0 \\ \vecZ_1 \end{array} \right] \sim \mathcal{N}_{2p} \left( \left[ \begin{array}{c} \mathbf{0}_p \\ \mathbf{0}_p \end{array} \right],\begin{bmatrix} \mathbf{I}_p & \mathbf{0}_p \\ \mathbf{0}_p & \vecSigma-\vecLambda\vecLambda^{\top} \end{bmatrix} \right).$$
{\color{black} Note that $|\vecZ_0|$ is a half-normal random variable, i.e., $|\vecZ_0| =  \vecU \sim \text{HN}\left(\ident_q\right)$.}  
 \cite{lee13c} refer to this as an ``unrestricted" skew-normal distribution because $\vecZ_0$ in the stochastic {\color{black}representation} \eqref{sahu} is a random vector rather than a scalar. This is in contrast to what they call a ``restricted" skew-normal distribution, like that of \cite{pyne09}, which has the characterization 
 \begin{equation}
\vecX=\vecmu+\veclambda| Z_0|+\vecZ_1,
\label{pyne}
\end{equation}
where
$$\left[ \begin{array}{c} Z_0 \\ \vecZ_1 \end{array} \right] \sim \mathcal{N}_{1+p} \left( \left[ \begin{array}{c} 0 \\ \mathbf{0}_p \end{array} \right],\begin{bmatrix} 1 & \mathbf{0}^{\top}_p \\ \mathbf{0}_p & \vecSigma -\veclambda\veclambda^{\top}\end{bmatrix} \right).$$
 Note that in the stochastic {\color{black}representation} in \eqref{pyne}, the random variable $Z_0$ is a scalar and the skewing parameter $\veclambda$ is a vector. 
 
 \cite{azzalini14} discuss why it is preferable to avoid using the terms ``restricted" and ``unrestricted" in this way. As such, we will refer to the skew-normal distribution of \cite{sahu03} as the SDB distribution herein, referring to the authors' initials, and we will refer to the skew-normal distribution of \cite{pyne09} as the classical skew-normal distribution.
Models based on the classical formulation are less computationally challenging to fit than their SDB counterparts. However, \cite{lee14} argue the SDB distribution is better at modelling asymmetric data.  The HTH distribution contains special cases that we will call HTHu ($q=1$) and HTHm ($1<q\leq p$) which contain the classical and SDB distributions, respectively, as special cases. Here ``u" in HTHu refers to the univariate integral in the HTH density \eqref{HTHdens} while the ``m" in HTHm refers to a multivariate integral (see Section~\ref{sec:HTH}). 

Herein a hyperbolic random variable is obtained via a relationship with a skew-normal random variable $\vecY$. A slightly more general formulation of the SDB skew-normal distribution is the canonical fundamental skew-normal (CFUSN) distribution \citep{arellano05} which has density
\begin{equation} 
f_{\tiny{\text{SN}}}(\vecy\mid\vecmu,\vecSigma,\vecLambda)=2^q\phi_p(\vecy\mid\vecmu,\vecOmega)\Phi_q\left\{\vecLambda^{\top}\vecOmega^{-1}(\vecy-\vecmu)\mid\vecDelta\right\},
\label{eq:sndensity}
\end{equation}
with location vector $\vecmu$, scale matrix $\vecSigma$, and $p\times q$ skewness matrix $\vecLambda$ such that $ \| \vecLambda^{\top} \vecz \| < 1$ for any unitary vector $z\in \mathbb{R}^p$, where $\vecOmega=\vecSigma+\vecLambda\vecLambda^{\top}$, $\vecDelta=\mathbf{I}_q-\vecLambda^{\top}\vecOmega^{-1}\vecLambda$ and $\phi_p(\cdot\mid\vecmu,\vecSigma)$ 
and $\vecPhi_q(\cdot\mid \vecSigma) \colonequals \vecPhi_q(\cdot\mid\vecmu = \zero_q,\vecSigma)$ are the density and cumulative distribution function of the multivariate normal distribution, respectively. 
The CFUSN density \eqref{eq:sndensity} is an  special case of the ``closed skew-normal'' (also called SUN) distribution, and \cite{arellano2006} gives an overview and give some the properties of the SUN distribution.


Note that the SDB skew-normal distribution uses a less general form of this density with $q=p$ \citep{sahu03}. \cite{arellano07} show we can obtain a $p$-dimensional skew-normal random variable $\vecY$ by 
$$\vecY=\vecmu+\vecLambda\vecU+\vecSigma^{1/2}\vecK,$$
where $\vecU\sim \text{HN}(\mathbf{I}_q)$, $\vecK\sim\mathcal{N}_p(\mathbf{0},\mathbf{I}_p)$, and $\vecSigma^{1/2}$ denotes the square root of the matrix $\vecSigma$. Note that $\text{HN}(\mathbf{I}_q)$ denotes the half-normal distribution with scale matrix $\mathbf{I}_q$. We write $\vecY\sim\text{SN}_p(\vecmu,\vecSigma,\vecLambda)$ to denote $\vecY$ is a skew-normal random variable with density  \eqref{eq:sndensity}. This variant of the skew-normal distribution will be used herein to obtain a HTH random variable $\vecX$. {\color{black}\cite{arslan2015} introduce a mean-variance mixture using a GIG distribution and a skew-normal distribution. The resulting model includes a GHD and a skew-normal distribution as special cases.} 

\section{The Hidden Truncation Hyperbolic Distribution}\label{sec:HTH}

\subsection{Overview}
Hidden truncation models describe observed variables that are truncated with respect to some hidden variables; see \cite{arnold00,arnold02,arnold93}. The hidden truncation hyperbolic distribution is developed using a hidden truncated random variable and is a ``super distribution" as discussed in Section~\ref{sec:special}. We begin by considering the symmetric hyperbolic distribution, which is a special case of \eqref{eq:ghddensity} with zero skewness, i.e., $\vecalpha=\mathbf{0}$. The hidden truncation hyperbolic (HTH) distribution introduced herein regulates skewness by means of a $p\times q$ skewness matrix $\vecLambda$, where $1\leq q\leq p$. Therefore, we can fit special cases of our distribution by changing the value of~$q$.

\subsection{Stochastic Representation}\label{sec:stochrep}

 We generate a $p$-dimensional random variable $\vecX$ following the HTH distribution through the stochastic representation 
$\vecX=\vecmu+\sqrt{W}\vecY$,
 where $\vecY\sim\text{SN}_p(\mathbf{0},\vecSigma,\vecLambda)$ and $W\sim\text{GIG}(\omega,\omega,\lambda)$. The HTH density is
 \begin{equation}
 \begin{split}
&f_{\text{HTH}}(\vecx|\vectheta)=\\
&\ 2^qh_p(\vecx|\vecmu,\vecOmega,\lambda,\omega,\omega) H_q\left[\vecLambda^{\top}\vecOmega^{-1}(\vecx-\vecmu)\left\{\frac{\omega}{\omega+\vecdelta(\vecx|\vecmu,\vecOmega)}\right\}^{1/4}\bigg|\mathbf{0},\mathbf\vecDelta,\lambda-(p/2),\gamma,\gamma\right],
\label{HTHdens}
\end{split}
 \end{equation}
 where $\vectheta=(\vecmu,\vecSigma,\load,\lambda,\omega)$, $\gamma=\omega \sqrt{ 1+\delta(\vecx,\vecmu|\vecOmega)/\omega}$, $h_p(\cdot|\vecmu,\vecSigma,\lambda,\omega,\omega)$ is the density of a $p$-dimensional symmetric hyperbolic random variable and $H_q(\cdot|\vecmu,\vecSigma,\lambda,\omega,\omega)$ is the corresponding $q$-dimensional cdf.  Notice that skewness is introduced into this distribution through the $p\times q$ skewness matrix $\vecLambda$. The distribution is also parameterized by a $p$-dimensional location parameter $\vecmu$, $p\times p$ scale matrix $\vecSigma$, $\omega\in\mathbb{R}_+$, and $\lambda\in \mathbb{R}$.
 To derive this density, we make use of Proposition~1.

\begin{prop}  
Given $W\sim\text{GIG}(\psi,\chi,\lambda)$,  and $c\in\mathbb{R}^{q},$
\begin{equation} 
{\rm E}\left\{\Phi_q\left( \frac{c} {\sqrt{W}} \bigg| \vecDelta\right)\right\} = H_q \left\{ c \bigg(\frac{\psi}{\chi}\bigg)^{1/4} \bigg|\mathbf{0},\vecDelta, \lambda, \sqrt{\chi\psi},\sqrt{\chi\psi}  \right\}.
\end{equation}
\end{prop}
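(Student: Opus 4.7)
The plan is to recognise the expectation on the left as the cumulative distribution function of a symmetric mean-variance mixture, and then to exploit the scaling property of the GIG family to convert that mixture into the identifiable $(\omega,\omega,\lambda)$ parameterisation used in Section~\ref{sec:ghd}, which by definition yields a symmetric hyperbolic random vector and hence the $H_q$ on the right.

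First, I would introduce independent auxiliary variates $W\sim\text{GIG}(\psi,\chi,\lambda)$ and $\vecK\sim\mathcal{N}_q(\mathbf{0},\mathbf{I}_q)$, and set $\vecY=\sqrt{W}\,\vecDelta^{1/2}\vecK$. Conditional on $W$, $\vecY\sim\mathcal{N}_q(\mathbf{0},W\vecDelta)$, so applying the tower property componentwise gives
\[
{\rm E}\!\left\{\Phi_q\!\left(\vecc/\sqrt{W}\bigm|\vecDelta\right)\right\}={\rm E}\{P(\vecY\leq \vecc\mid W)\}=P(\vecY\leq \vecc).
\]
Next, I would record the elementary scaling identity for the GIG family: if $Z\sim\text{GIG}(\psi,\chi,\lambda)$ then $aZ\sim\text{GIG}(\psi/a,\,a\chi,\lambda)$ for every $a>0$, a one-line change of variables in the density in Section~\ref{sec:gig}. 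Choosing $a=\sqrt{\psi/\chi}$ and setting $\omega=\sqrt{\chi\psi}$ shows that $\sqrt{\psi/\chi}\,W\sim\text{GIG}(\omega,\omega,\lambda)$, so if $W^*\sim\text{GIG}(\omega,\omega,\lambda)$ is taken independent of $\vecK$ then $W\stackrel{d}{=}\sqrt{\chi/\psi}\,W^*$ and therefore
\[
\vecY\stackrel{d}{=}\sqrt{\sqrt{\chi/\psi}\,W^*}\,\vecDelta^{1/2}\vecK=(\chi/\psi)^{1/4}\vecY^*,\qquad \vecY^*:=\sqrt{W^*}\,\vecDelta^{1/2}\vecK.
\]
The stochastic representation of the symmetric hyperbolic distribution recalled at the end of Section~\ref{sec:ghd} identifies $\vecY^*\sim h_q(\cdot\mid\mathbf{0},\vecDelta,\lambda,\omega,\omega)$.

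Finally, since $(\chi/\psi)^{1/4}>0$, componentwise rescaling turns $P(\vecY\leq\vecc)$ into $P\!\left(\vecY^*\leq(\psi/\chi)^{1/4}\vecc\right)=H_q\!\left\{(\psi/\chi)^{1/4}\vecc\mid\mathbf{0},\vecDelta,\lambda,\sqrt{\chi\psi},\sqrt{\chi\psi}\right\}$, which is the claimed right-hand side. The only step that needs care is pinning down the exponents in the GIG scaling rule and then tracing how those exponents propagate through the two square roots in the stochastic representation; once that bookkeeping is correct, no further integration or density manipulation is required, so this is by far the most efficient route compared with computing the expectation directly against the explicit GIG density.
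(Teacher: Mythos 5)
Your argument is correct: recognising the left-hand side as $P(\vecY\le\vecc)$ for the mean--variance mixture $\vecY=\sqrt{W}\,\vecDelta^{1/2}\vecK$, then invoking the GIG scaling identity $aW\sim\text{GIG}(\psi/a,\,a\chi,\lambda)$ with $a=\sqrt{\psi/\chi}$ to pass to the identifiable $(\sqrt{\chi\psi},\sqrt{\chi\psi},\lambda)$ parameterisation, is exactly the content of the result, and your exponent bookkeeping ($W\stackrel{d}{=}\sqrt{\chi/\psi}\,W^{*}$ contributing the factor $(\chi/\psi)^{1/4}$, hence the argument $\vecc(\psi/\chi)^{1/4}$ of $H_q$) checks out against the way the proposition is used in deriving the HTH density. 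This is essentially the same route as the paper's proof, merely phrased at the level of stochastic representations rather than by writing the expectation as a double integral and interchanging the order of integration.
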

\noindent See Supplementary Material S.3 for a proof. 

Then, to derive the HTH density, we write out the joint density of $\vecX$ and $W$ and then integrate out $W$. The conditional density is
$\vecX | w \sim\text{SN}_p( \vecmu,  w\vecSigma, \sqrt{w}\vecLambda)$ and $W\sim\text{GIG}(\omega,\omega,\lambda)$ which gives
\begin{equation*}
 \begin{split}
f&_{\text{HTH}}(\vecx|\vectheta)=\int_0^{\infty} f_{\tiny{\text{SN}}}(\vecy\mid\vecmu,  w \vecSigma, \sqrt{w}\vecLambda) {\color{black}f_{\text{GIG}}}(w | \omega, \omega, \lambda) d w \\
&=\int_0^{\infty} 2^q\phi_p(\vecy\mid\vecmu, w \vecOmega)\ \Phi_q\left\{ \frac{1}{\sqrt{w} } \vecLambda^{\top}\vecOmega^{-1}(\vecy-\vecmu)\Big|\vecDelta \right\} {\color{black}f_{\text{GIG}}}(w | \omega, \omega, \lambda) d w \\
&= 2^qh_p(\vecx|\vecmu,\vecOmega,\lambda,\omega,\omega) \int_0^{\infty} \Phi_q\left\{ \frac{1}{\sqrt{w} } \vecLambda^{\top}\vecOmega^{-1}(\vecy-\vecmu)\Big|\vecDelta \right\} {\color{black}f_{\text{GIG}}}\left(w | \omega + \delta(\vecx,\vecmu|\vecOmega) , \omega, \lambda \right) d w \\
&= 2^qh_p(\vecx|\vecmu,\vecOmega,\lambda,\omega,\omega) \mbox{E}\left[ \Phi_q\left\{ \frac{1}{\sqrt{W} } \vecLambda^{\top}\vecOmega^{-1}(\vecy-\vecmu)\Big|\vecDelta \right\} \right].
\end{split}
\end{equation*}
Then, applying Proposition 1, we obtain the HTH density given in {\color{black}\eqref{HTHdens}}.

\subsection{Hierarchical Representation}\label{sec:hrep}

A random variable $\vecX\sim\text{HTH}_p(\vecmu,\vecSigma,\vecLambda,\lambda, \omega,\omega)$ can be represented hierarchically by
\begin{equation*}
\begin{split}
\vecX~|~\vecu,w&\sim\mathcal{N}_p\left(\vecmu+\vecLambda\vecU,w\vecSigma\right),\quad
\vecU~|~w\sim\text{HN}_q\left(w\mathbf{I}_q\right),\quad
W\sim\text{GIG}(\omega,\omega,\lambda).
\end{split}
\end{equation*}
Herein, write $\vecr=\vecLambda^{\top}\vecOmega^{-1}(\vecx-\vecmu)$ and $\vecdelta(\vecx~|~\vecmu,\vecOmega)=\left(\vecx-\vecmu\right)^{\top}\vecOmega^{-1}\left(\vecx-\vecmu\right)$. There follows a listing of the joint and conditional densities pertaining to the variables $\vecX$, $\vecU$, and $W$:
\begin{equation*}
 \begin{split} 
& f_{\vecX,\vecU,W}(\vecx,\vecu,w)=\frac{w^{\lambda-(p+q)/2-1}}{\pi^{(p+q)/2}2^{(p-q)/2+1}K_\lambda(\omega)|\vecSigma|^{1/2}}\\
 &\qquad\qquad  \times\exp \left[-\frac{1}{2w}\left\{\omega w^2+\omega+(\vecu-\vecr)^{\top}\vecDelta^{-1}(\vecu-\vecr)+\vecdelta(\vecx|\vecmu,\vecOmega) \right\}\right],\\
&f_{\vecX,\vecU}(\vecx,\vecu)=\frac{|\vecSigma|^{-1/2}}{\pi^{(p+q)/2}2^{(p-q)/2}K_\lambda(\omega)}\left\{\frac{\omega+(\vecu-\vecr)^{\top}\vecDelta^{-1}(\vecu-\vecr)+\vecdelta(\vecx|\vecmu,\vecOmega)}{\omega}\right\}^{\{\lambda-(p+q)/2\}/2}\\
&\qquad\qquad\times K_{\lambda-(p+q)/2}\left(\sqrt{\omega\left\{\omega+(\vecu-\vecr)^{\top}\vecDelta^{-1}(\vecu-\vecr)+\vecdelta(\vecx|\vecmu,\vecOmega)\right\}}\right),\\
\end{split}\end{equation*}
\begin{equation*}
 \begin{split} &f_{\vecX,W}(\vecx,w) = \frac{w^{\lambda-p/2-1}|\vecDelta|^{1/2}}{(\pi)^{p/2}2^{p/2-q+1}K_\lambda(\omega)|\vecSigma|^{1/2}}\\
& \qquad\qquad\times\exp \left[-\frac{1}{2w}\left\{\omega w^2+\omega+\vecdelta(\vecx|\vecmu,\vecOmega)\right\}\right]\vecPhi_q\{(1/\sqrt{w})r|\vecDelta\},\\
&f_{\vecU|\vecx,w}(\vecu~|~\vecx,w) = \frac{w^{-q/2}}{(2\pi)^{q/2}|\vecDelta|^{1/2}\vecPhi_q\{(1/\sqrt{w})r|\vecDelta\}}\exp\left[-\frac{1}{2w}\left\{(\vecu-\vecr)^{\top}\vecDelta^{-1}(\vecu-\vecr)\right\}\right],\\
&f_{W|\vecx}(w~|~\vecx)=\frac{w^{\lambda-(p/2)-1}}{2K_{\lambda-(p/2)}\big\{\sqrt{\omega(\omega+\vecdelta(\vecx|\vecmu,\vecOmega))}\big\}}\bigg\{\frac{\omega}{\omega+\vecdelta(\vecx|\vecmu,\vecOmega)}\bigg\}^{\{\lambda-(p/2)\}/2}\\
&\qquad\qquad\times\exp\bigg[\frac{-1}{2w}\big\{\omega w^2 + \omega +\vecdelta(\vecx|\vecmu,\vecOmega)\big\}\bigg]\vecPhi(\vecr/\sqrt{w}\mid \vecDelta)\\
&\qquad\qquad\div H_q\left[\vecr\left\{\frac{\omega}{\omega+\vecdelta(\vecx|\vecmu,\vecOmega)}\right\}^{1/4}\bigg|\mathbf{0},\mathbf\vecDelta,\lambda-(p/2),\sqrt{\omega\{\omega+\vecdelta(\vecx|\vecmu,\vecOmega)\}}\right],
\end{split}
\end{equation*}
and $W~|~\vecx,\vecu\sim \text{GIG}\{\omega,\omega + (\vecu-\vecr)^{\top}\vecDelta^{-1}(\vecu-\vecr)+\vecdelta(\vecx|\vecmu,\vecOmega),\lambda-(p+q)/2\}$.
Note that the conditional density of $\vecU$ given $\vecX=\vecx$ is
\begin{equation}
\begin{split}
f_{\vecU|\vecx}(\vecu|\vecx)=\frac{1}{c_\lambda} h_q(\vecu|\vecr,\vecDelta,\lambda-p/2,\gamma,\gamma),
\end{split}
\label{ux}
\end{equation}
where the support of $\vecU$ is $\mathbb{R}_+^q$ 
and $$c_\lambda= H_q\left[\vecr\left\{\frac{\omega}{\omega+\vecdelta(\vecx|\vecmu,\vecOmega)}\right\}^{1/4}\bigg|\mathbf{0},\mathbf\vecDelta,\lambda-(p/2),\gamma,\gamma \right].$$
It follows that {\color{black}$\vecU~|~w,\vecx \sim \text{TN}(\vecr,w\vecDelta; \mathbb{R}^{q}_+),$
where $\text{TN}(\cdot)$ denotes the truncated normal distribution}. Also, $\vecU~|~\vecx \sim \text{TH}_q(\vecr, \vecDelta,\lambda-p/2, \gamma,\gamma);\mathbb{R}_+^q).$  Here, $\text{TH}_q(\vecmu,\vecSigma, \lambda,\psi,\chi;\mathbb{R}_+^q)$ denotes the $q$-dimensional symmetric truncated hyperbolic distribution with density 
$$f_{\text{TH}}(\vecu~|~\vecmu,\vecSigma,\lambda,\psi,\chi;\mathbb{R}_+^q)=
\frac{h_q(\vecu~|~\vecmu,\vecSigma,\lambda,\psi,\chi)}{\int^{\infty}_{0}\ldots \int^{\infty}_{0}h_q(\vecu~|~\vecmu,\vecSigma,\lambda,\psi,\chi)d\vecu}\mathbb{I}_{\mathbb{R}_+^q}(\vecu),$$
where $\mathbb{I}_{\mathbb{R}_+^q}(\vecu)=1$ if
$\vecu\in \mathbb{R}_+^q$ and $\mathbb{I}_{\mathbb{R}_+^q}(\vecu)=0$ otherwise. In this way, the symmetric hyperbolic distribution is truncated to exist only within $\mathbb{R}_+^q$.

\subsection{Free Parameters and Identifiability}
A permutation matrix is a square matrix with exactly one entry of 1 in each row and each column and 0s elsewhere. Now, if $\mathbf{P}$ is a permutation matrix, then $\mathbf{P}^{\top}\vecU$ has the same distribution as $\vecU\sim\text{HN}\left(\ident_q\right)$. 
Therefore, $\vecLambda$ in our HTH distribution is unique up to permutations applied to the right, i.e., $f_{\text{HTH}}(\vecx \mid \vectheta) = f_{\text{HTH}}(\vecx \mid \vectheta^*)$ for all $\vecx\in\mathbb{R}^p$, where $\vectheta=(\vecmu,\vecSigma,\load,\lambda,\omega)$ and $\vectheta^*=(\vecmu,\vecSigma,\load^*,\lambda,\omega)$, if there exists a permutation matrix $\mathbf{P}$ such that $ \vecLambda^* =\vecLambda \mathbf{P} $.
Because $\vecLambda$ is unique up to permutations applied to the right, it has $p q$ free parameters --- the permutation matrix simply changes the position of the elements and not the actual values of the elements. Accordingly, sorting the $\vecLambda$ by the norm of the columns or some other sorting method  is needed to ensure that our factor-like model is identifiable, i.e., to take away the caveat about permutations applied to the right.

\subsection{Moments}

\subsubsection{Notation}

In this section, we derive the first two moments of the truncated symmetric hyperbolic distribution. These moments have not previously appeared in the literature. Here we introduce the notation used in this section. Consider an $n$-vector $\vecb$ and an $n\times n$ matrix $\matc$. Use $b_r$ to denote the $r$th element of $\vecb$, and let $\vecb_{-r}$ denote the $(n-1)$-vector that results from removing the $r$th element from $\vecb$. Let $\vecb_{rs}$ denote the 2-vector $(b_r,b_s)^{\top}$, and use $\vecb_{-rs}$ to denote the $(n-2)$-vector that results from removing the $r$th and $s$th elements from $\vecb$. Let $c_{rs}$ denote the element in the $r$th row and $s$th column of $\matc$, and use $\matc_{-rs}$ to denote the $(n-2)\times(n-2)$ matrix that results from removing the $r$th row and $s$th column from $\matc$. Use $\matc_{rs}$ to denote the $2\times 2$ matrix $$\begin{pmatrix}c_{rr}&c_{rs}\\c_{sr}&c_{ss}\end{pmatrix},$$ and let $[\matc]_{rs}$ denote the element in the $r$th row and $s$th column of $\matc$.

\subsubsection{Univariate Case}

Where $Y$ is a univariate random variable following the truncated symmetric hyperbolic distribution, i.e.,\ $ Y \sim \text{TH}_1(\mu,{ \color{black}\sigma^2},\lambda,\omega,\omega$), and truncated to exist on the domain $[l_1,l_2]$, the first and second moments of $Y$ are given by 
$$\textrm{E}(Y)=\mu+\sigma^2R_\lambda(\omega)\left\{\frac{h_1(l_1\mid \vectheta_{[\lambda+1]})-h_1(l_2\mid\vectheta_{[\lambda+1]})}{H_1(l_2\mid \vectheta_{[\lambda]})-H_1(l_1\mid \vectheta_{[\lambda]})}\right\}$$
and 
\begin{equation*}
\begin{split}
\textrm{E}(Y^2)=\sigma^2 R_\lambda(\omega)&\Bigg\{\frac{H_1(l_2\mid \vectheta_{[\lambda+1]})-H_1(l_1\mid \vectheta_{[\lambda+1]})}{H_1(l_2\mid  \vectheta_{[\lambda]})-H_1(l_1\mid  \vectheta_{[\lambda]})}+\frac{l_1h_1(l_1\mid \vectheta_{[\lambda+1]})-l_2h_1(l_2\mid \vectheta_{[\lambda+1]})}{H_1(l_2\mid \vectheta_{[\lambda]} )-H_1(l_1\mid \vectheta_{[\lambda]} )}\\
&\qquad\qquad\qquad\qquad\qquad\qquad\qquad-\frac{h_1(l_2\mid \vectheta_{[\lambda+1]})-h_1(l_1\mid\vectheta_{[\lambda+1]})}{H_1(l_2\mid  \vectheta_{[\lambda]})-H_1(l_1\mid  \vectheta_{[\lambda]})}\Bigg\},
\end{split}
\end{equation*}
respectively, where $\vectheta_{[\lambda]}=(\mu,\sigma^2,\lambda,\omega,\omega)$ and $\vectheta_{[\lambda+1]}=(\mu,\sigma^2,\lambda+1,\omega,\omega).$  Note that, to fit a HTH mixture model, we let $l_1=0$ and $l_2=\infty$.

\subsubsection{Multivariate Case}

Suppose $\vecY$ is a multivariate random variable following a truncated symmetric hyperbolic distribution, truncated to exist in the region $\mathbb{R}_{[a_1, \infty)\times\cdots\times[a_q, \infty)}^q$, i.e.,\ $\vecY \sim \text{TH}_q(\vecmu,\matsig,\lambda,\omega,\omega; \mathbb{R}_{[a_1, \infty)\times\cdots\times[a_q, \infty)}^q)$. The first moment of the multivariate truncated symmetric hyperbolic distribution is given by
$$
\textrm{E}(\vecY)=\vecmu+c^{-1}\vecSigma\vecepsilon,
$$
where \begin{equation}\label{eqn:c}c=\int^{\infty}_{a_1}\cdots \int_{a_q}^{\infty}h_q(\vecy \mid \vecmu,\vecSigma,\lambda,\omega,\omega)d\vecy,\end{equation} and the $r$th entry of $\vecepsilon$ is 
\begin{equation*}
\begin{split}
R_\lambda(\omega) &h_1(a_r \mid \mu_r,\sigma_{rr},\lambda+1,\omega,\omega)\\&\times
H_{q-1}\left(\veca_{-r}~\Bigg|~\vecmu_{-r}, \sqrt{\frac{\omega+\gamma(r)}{\omega}}\matsig_{-r}, \lambda+1/2, \omega\sqrt{1+\gamma(r)/\omega},\omega\sqrt{1+\gamma(r)/\omega}\right),
\end{split}
\end{equation*}
where $\veca=(a_1,\ldots,a_q)^{\top}$ and $\gamma(r)=(a_r-\mu_r)^2/\sigma_{rr}$.
The second moment is given by
$$\textrm{E}(\vecY \vecY^{\top})= \vecmu\textrm{E}(\vecY)^{\top} +\textrm{E}(\vecY)\vecmu^{\top} -\vecmu\vecmu^{\top} +\frac{k}{c} \, \vecSigma +\frac{1}{c}\vecSigma(\vecH+\vecD)\vecSigma,$$
where $c$ is as defined in \eqref{eqn:c}, $\vecH$ is a $q \times q$ matrix with zeros on the diagonal and $(r,s)${th} off-diagonal entry given by 
\begin{equation*}
\begin{split}
&\frac{K_{\lambda+2}(\omega)}{K_{\lambda}(\omega)} h_2(\veca_{rs} \mid \vecmu_{rs},\matsig_{rs},\lambda+2,\omega,\omega)\\
&\qquad\times H_{q-2}\left(\veca_{-rs}~\Bigg|~\vecmu_{-rs}, \sqrt{\frac{\omega+\gamma(r,s)}{\omega}}\matsig_{-rs}, \lambda+1, \omega\sqrt{1+\gamma(r,s)/\omega}, \omega\sqrt{1+\gamma(r,s)/\omega}\right),
\end{split}
\end{equation*}$\gamma(r,s)= (\veca_{rs}-\vecmu_{rs})^{\top}\vecSigma_{rs}^{-1}(\veca_{rs}-\vecmu_{rs})$, $\vecD$ is a $q \times q$ diagonal matrix with the $r${th} diagonal entry given by $\sigma_{rr}^{-1}\{(a_r-\mu_r)\epsilon_r-[\vecSigma \vecH]_{rr}\}$, and 
$$k= R_\lambda(\omega)\int_{a_1}^{\infty}\cdots\int_{a_q}^{\infty} h_q(\vecy \mid \vecmu, \vecSigma, \lambda+1, \omega,\omega)d\vecy.$$

For a mixture of HTH distributions, we set $\veca=\mathbf{0}.$ Note that in the bivariate case, the second moment is equivalent to the second moment in the multivariate case $(p>2)$ with the exception that $\vecH$ is a $q \times q$ matrix with zeros on the diagonal and the $(r,s)${th} off-diagonal entry given by 
$$\frac{K_{\lambda+2}(\omega)}{K_{\lambda}(\omega)} \, h_2(\veca_{rs} \mid \vecmu_{rs},\matsig_{rs},\lambda+2,\omega,\omega).$$
See Supplementary Material S.1 for proofs pertaining to the moments derived in this section. Note that results on double-truncation for the multivariate $t$-distribution \cite{ho12} could be extended to the multivariate truncated symmetric hyperbolic distribution.

\subsection{Special cases}\label{sec:special}

The HTH distribution can be considered as a special case of the canonical fundamental skew-spherical distribution \citep{arellano05}. 
Naturally, for all asymmetric distributions that exist as special or limiting cases of the HTH distribution, the value of $q$ may vary in the skewness parameter. Arellano-Valle and Genton \cite{arellano05} introduced a canonical fundamental skew-normal distribution (CFUSN) whereby skewness is modeled by a $p\times q$ skewness matrix. They also introduce a canonical skew-$t$ distribution (CFUST) that can capture both the classical and SDB versions of the skew-$t$ and skew-normal distributions as special cases. Our HTH distribution includes the CFUST distribution as a special case and, as a consequence, the CFUSN as a limiting case. 

\section{A Finite Mixture of HTH Distributions}\label{sec:ident}

\subsection{A Finite Mixture Model}
To facilitate the modelling of heterogeneous data in general, we develop a finite mixture of HTH distributions. The model has density 
\begin{equation}
g(\vecx~|~\vecvarthet)=\sum_{g=1}^{G}\pi_g f_{\text{HTH}}(\vecx~|~\vectheta_g),
\label{eq:generalmixture1}
\end{equation}
such that 
$\vecvarthet=(\pi_1,...,\pi_G,\vectheta_1,...,\vectheta_G)$, 
$\pi_g>0$ is the $g$th mixing proportion such that $\sum_{g=1}^{G}\pi_g=1$, $\vectheta_g=(\vecmu_g,\vecSigma_g,\vecLambda_g,\lambda_g,\omega_g)$, and $f_{\text{HTH}}(\vecx|\vectheta_g)$ is the density of the HTH distribution. 

\subsection{Finite Mixture Identifiability}\label{sec:ident}

We prove the identifiability of our model and, therefore, identifiability of all mixture models that exist as special cases of our model. 
The identifiability of mixtures of distributions has been studied by \cite{teicher63} and \cite{holzmann06}, among others. In the fitting of mixture models, identifiability is necessary for consistent estimation of the parameters of the mixing distribution. \cite{holzmann06} point out that previous literature was lacking formal proofs of identifiability of mixture models, and they provide proofs of identifiability for elliptical finite mixtures. 

{\color{black}\begin{defn}\label{def:1}
A mixture distribution has the form
\begin{equation}\label{eq:mixturedist}
H( \vecx | \vectheta ) = \int_{\mathcal{S}_y } F(\vecx~|~\vecy, \vectheta_F  ) dG(\vecy | \vectheta_G ),
\end{equation}
where $F( \vecx| \vecy, \vectheta_F )$ is a distribution function for all $\vecy \in \mathcal{S}_y$, $\vectheta = ( \vectheta_F, \vectheta_G)  \in \Omega$  and $G$ is a distribution function defined on $ \mathcal{S}_y$, and the mixture density $H$ is said to be identifiable if and only if there is a unique $G$ yielding $H$.\end{defn}}

One famous example is the $t$-distribution being a scale mixture of a Gaussian distribution where $F$ is Gaussian with mean $\mu$ and variance $\sigma^2/y $, $G$ is a gamma distribution with parameters $(\nu/2, \nu/2)$ where $\nu$ is the degrees of freedom. That is $\vectheta_F = (\mu, {\color{black}\sigma^2})$ and $\vectheta_G = (\nu)$. Similar to this example the $H$ mixture distribution \eqref{eq:mixturedist} can depend on parameters through $F$ and $G$ and these parameters are identifiable if there is a unique $G$ yielding $H$. 

Other examples include the skew-normal density \eqref{eq:sndensity}, its skew-$t$ analogue, and the HTH density \eqref{HTHdens}, all hidden truncated densities that can be represented as a mixture distribution
\begin{equation*}
h(\vecx \mid \vectheta)
=\frac{1}{k_q} \int_{\vecy \in \mathbb{R}^q_+ } h(\vecx, \vecy \mid \vectheta) \mathrm{d} \vecy= \frac{1}{k_q} \int_{\vecy \in \mathbb{R}^q_+ } h(\vecx |\vecy, \vectheta)g( \vecy) \mathrm{d} \vecy,
\end{equation*}
where
\begin{equation*}
 k_q =\int_{\vecx \in \mathbb{R}^p } \int_{\vecy \in \mathbb{R}^q_+ } h(\vecx, \vecy \mid \vectheta) \mathrm{d} \vecyÊÊ\mathrm{d}\vecx
= \frac{1}{2^q},
\end{equation*}
and $\vecy$ is a realization of a $q$-dimensional random variable $\vecY$. Note that the integrating constant $k_q$ does not depend on the parameter $\vectheta$ and the support of the latent variable $\vecY$ is the hypercube defined by the positive plane $\mathbb{R}^q_+$. Also, for these models $g(\vecy)$ does not depend on any parameters. 

Another example but special case of {\color{black}Definition~\ref{def:1}} is when the distribution function $G(\cdot)$ consists of a finite number of elements, that is, $\mathcal{S}_y = \left\{ \vectheta_1, \vectheta_2, \ldots, \vectheta_G\right\}$. Then
$$ H(\vecx~ |~\vecvarthet) =  \int_{ \Omega } F(\vecx | \vecy ) dG(\vecy)  = \sum_{g=1}^G  F(\vecx ~|~ \vecy) P(\vecy= \vectheta_g) =\sum_{g=1}^G \pi_g  F(\vecx ~|~ \vectheta_g),  $$
which is a finite mixture where $\pi_g  >0$, $\sum_{g=1}^G \pi_g =1$, and $\vecvarthet =  \left(\pi_1, \ldots, \pi_G, \vectheta_1, \ldots, \vectheta_G\right)$. If, in this special case, $H$ is found to be identifiable then $H$ is said to be finite mixture identifiable or equivalently is said that a finite mixture of $F(\vecx ~|~ \vectheta_g)$ is identifiable. 
\begin{thm}
A finite mixture of  $H(\vecx|\vectheta)$ is identifiable if the distribution  
$$H(\vecx |\vectheta)   =  \int_{ \mathcal{S}_y} F(\vecx | \vecy,\vectheta) dG(\vecy),$$ 
is identifiable where $G(\cdot)$ is a distribution function defined on $\mathcal{S}_y$ and a finite mixtures of $F(\vecx | \vecy, \vectheta)$ is identifiable for all $y \in \mathcal{S}_y$ and any $x \in \mathcal{S}_x$.\end{thm}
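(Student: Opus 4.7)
The plan is to suppose two finite mixture representations of the same density,
\[
\sum_{g=1}^G \pi_g H(\vecx | \vectheta_g) = \sum_{h=1}^{G'} \pi_h' H(\vecx | \vectheta_h'),
\]
holding for every $\vecx \in \mathcal{S}_x$, and to deduce that the two multisets $\{(\pi_g, \vectheta_g)\}_g$ and $\{(\pi_h', \vectheta_h')\}_h$ must coincide up to relabelling. First I would substitute $H(\vecx|\vectheta)=\int_{\mathcal{S}_y} F(\vecx|\vecy,\vectheta)\,dG(\vecy)$ into both sides and, since the sums are finite, use Fubini to interchange summation and integration, rewriting the equality as
\[
\int_{\mathcal{S}_y}\Big[\sum_{g=1}^G\pi_g F(\vecx|\vecy,\vectheta_g)-\sum_{h=1}^{G'}\pi_h' F(\vecx|\vecy,\vectheta_h')\Big]\,dG(\vecy)=0
\]
for all $\vecx\in\mathcal{S}_x$.

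Next I would view the two sides of the original equality, in light of this rearrangement, as two Definition~\ref{def:1}-style mixtures sharing the \emph{same} mixing distribution $G$ but with mixands
\[
\tilde F_1(\vecx|\vecy)=\sum_{g=1}^G\pi_g F(\vecx|\vecy,\vectheta_g),\qquad \tilde F_2(\vecx|\vecy)=\sum_{h=1}^{G'}\pi_h' F(\vecx|\vecy,\vectheta_h').
\]
Appealing to the first hypothesis --- that the representation $\int F\,dG$ uniquely identifies its integrand --- one argues that $\tilde F_1(\vecx|\vecy)=\tilde F_2(\vecx|\vecy)$ for $G$-almost every $\vecy$ and every $\vecx\in\mathcal{S}_x$. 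The second hypothesis then finishes the proof: for each such $\vecy$ the equality is a coincidence of two finite mixtures of $F(\cdot|\vecy,\vectheta)$ in $\vecx$, and identifiability of such mixtures forces $G=G'$ together with a permutation $\sigma$ satisfying $(\pi_g,\vectheta_g)=(\pi_{\sigma(g)}',\vectheta_{\sigma(g)}')$ for all $g$, which is the required identifiability of the finite mixture of $H$.

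The main obstacle is the middle step, passing from a single integral equality in $\vecx$ to $G$-almost-sure pointwise equality of the mixands. Identifiability of $H$ as stated in Definition~\ref{def:1} is a statement about uniqueness of the mixing distribution $G$ for a fixed mixand family; here, however, $G$ is already common to both sides and what must be shown to agree is the mixand. A fully rigorous argument therefore has to either reinterpret $\tilde F_1,\tilde F_2$ as the ``mixand'' in an enlarged Definition~\ref{def:1} setting, extending $H$-identifiability to the family of finite convex combinations of $F$, or else replace this step by a linear-independence argument that combines identifiability of $F$-mixtures at each $\vecy$ with an auxiliary completeness or joint-linear-independence property so as to force the bracketed difference to vanish $G$-almost surely. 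This bridging between the outer ($H$) and inner ($F$) mixture structures is where the real content of the theorem sits.
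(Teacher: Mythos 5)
Your overall strategy is sound in outline, but the middle step is a genuine gap, and you have correctly diagnosed it yourself: identifiability of $H$ in the sense of Definition~1 asserts uniqueness of the mixing distribution $G$ for a \emph{fixed} kernel $F$, whereas your argument needs a converse-type statement --- that with $G$ held fixed, the equality $\int \tilde F_1(\vecx|\vecy)\,dG(\vecy)=\int \tilde F_2(\vecx|\vecy)\,dG(\vecy)$ for all $\vecx$ forces $\tilde F_1(\cdot|\vecy)=\tilde F_2(\cdot|\vecy)$ for $G$-almost every $\vecy$. Nothing in the hypotheses delivers this; it is essentially a completeness (joint linear-independence) property of the family of finite convex combinations of the kernels $F(\cdot|\cdot,\vectheta)$ relative to $G$, and without it the passage from a single integral identity in $\vecx$ to pointwise-in-$\vecy$ equality of the mixands does not go through. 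Since you leave both of your proposed repairs (reinterpreting the mixand within an enlarged Definition~1, or an auxiliary linear-independence argument) as unexecuted alternatives, the proof as written is incomplete precisely where, as you say, the real content of the theorem sits.

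The paper's proof takes a different tack that avoids holding $G$ fixed and trying to recover the integrand. It absorbs the component label into the latent variable: writing $\sum_{g=1}^G \pi_g H(\vecx|\vectheta_g) = \int_{\mathcal{S}_y\times\vecTheta} F(\vecx|\vecy,\vecz)\,dG^*(\vecy,\vecz)$ with $G^* = G\otimes P$ and $P(\vecZ=\vectheta_g)=\pi_g$, it reduces finite-mixture identifiability of $H$ to uniqueness of the single mixing measure $G^*$ on the product space $\mathcal{S}_y\times\vecTheta$, and then argues this componentwise: the $\vecy$-marginal is pinned down by identifiability of $H$ and the measure on $\vecz$ by finite-mixture identifiability of $F$. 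If you wish to salvage your sequential peeling argument you must actually supply the completeness input you allude to; otherwise, reorganizing the two layers of mixing into one mixture over the product space, as the paper does, is the more direct route to the stated conclusion.
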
%
\begin{proof} A finite mixture of $H(\vecx|\vectheta)$ is given by
\begin{equation*} \begin{split}
\sum_{g=1}^G \pi_g  H(\vecx|\vectheta_g)
&=  \sum_{g=1}^G \pi_g   \int_{ \mathcal{S}_y} F(\vecx|\vecy,\vectheta_g) dG(\vecy)  \\
& =  \sum_{g=1}^G \int_{ \mathcal{S}_y} F(\vecx|\vecy, \vecz) dG(\vecy) P(\vecZ=\vectheta_g) \\
& =  \int_{ \mathcal{S}_y} \sum_{g=1}^G  F(\vecx|\vecy, \vecz)  P(\vecZ=\vectheta_g) dG(\vecy) 
 = \int_{ \mathcal{S}_u } F(\vecx | \vecu ) dG^*(\vecu),
\end{split}  \end{equation*}
where $G^*(\vecu)  $ is the distribution function for $\vecu = ( \vecy, \vecz)$  defined over $\mathcal{S}_u =\mathcal{S}_y \times \vecTheta$ and $\vecTheta = \left\{ \vectheta_1, \ldots, \vectheta_G \right\}$. $G(\vecy)$ is unique because $H(\vecx|\vectheta)$ is identifiable. The measure on $\vecz$ is unique because a finite mixture of $F(\vecx | \vecy, \vectheta)$ is identifiable. Therefore, $G^*(\vecu)$ is unique and so a finite mixture of $H(\vecx|\vectheta)$ is identifiable.
\end{proof}
\begin{cor}
Finite mixtures of the skew-normal, skew-t, and HTH distributions are identifiable. 
\end{cor}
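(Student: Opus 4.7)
The plan is to invoke Theorem~1 three times, once for each family (skew-normal, skew-$t$, HTH). In every case the density admits a hidden-truncation representation as a continuous mixture
$$H(\vecx\mid\vectheta) = \int_{\mathbb{R}^q_+} F(\vecx\mid\vecu,\vectheta)\,g(\vecu)\,d\vecu,$$
with $g$ the (parameter-free) half-normal density on $\mathbb{R}^q_+$ induced by $\vecU\sim\text{HN}(\ident_q)$. So for each family it is enough to verify (i) that the base distribution $H(\vecx\mid\vectheta)$ is itself identifiable in $\vectheta$ in the sense of Definition~\ref{def:1}, and (ii) that finite mixtures of the conditional $F(\vecx\mid\vecu,\vectheta)$ are identifiable for every admissible $\vecu$.

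For condition~(ii), I would exploit the representation of \cite{arellano07} in the SN case and the hierarchical representation of Section~\ref{sec:hrep} in the skew-$t$ and HTH cases. Conditioning on $\vecU$ alone in the SN case, or on the enlarged latent vector $(\vecU,W)$ in the skew-$t$ and HTH cases, makes $F$ a translated multivariate Gaussian $\phi_p(\vecx\mid\vecmu+\vecLambda\vecu,\,\cdot)$, and the half-normal density $g$ is simply augmented by a fixed GIG density on the extra coordinate. Finite mixtures of multivariate Gaussians are identifiable as a special case of the elliptical-mixture result of \cite{holzmann06}, so~(ii) holds uniformly across the three families.

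For condition~(i), the elliptical factor in each density --- $\phi_p$ for SN, the multivariate $t$ density for skew-$t$, and $h_p(\vecx\mid\vecmu,\vecOmega,\lambda,\omega,\omega)$ for HTH --- uniquely pins down $(\vecmu,\vecOmega)$ together with any extra shape parameters ($\lambda,\omega$ or the degrees of freedom), while the trailing $\Phi_q$ or $H_q$ skewing factor in \eqref{eq:sndensity} and \eqref{HTHdens} then determines $\vecLambda$ and hence recovers $\vecSigma=\vecOmega-\vecLambda\vecLambda^\top$. For the SN and skew-$t$ families this parameter identifiability is in the literature; for HTH it is essentially the content of the Free Parameters and Identifiability subsection above, combined with injectivity of the GIG Laplace transform to separate the $(\lambda,\omega)$ pair from the elliptical factor.

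The main obstacle I anticipate is the column-permutation ambiguity of $\vecLambda$, since Theorem~1 is phrased for literal uniqueness of the mixing measure. I would handle it exactly as earlier in the paper: either work in the quotient parameter space modulo permutations of the columns of $\vecLambda$, or adopt the column-sorting convention on $\vecLambda$ already used to ensure identifiability of the underlying factor-like model. With that convention in force, both hypotheses of Theorem~1 are met for each of SN, skew-$t$, and HTH, and the corollary follows by three applications of the theorem.
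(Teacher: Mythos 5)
Your overall strategy---apply Theorem~1 to the hidden-truncation mixture representation and verify its two hypotheses for each of the three families---is exactly the route the paper takes, and your treatment of condition~(i) (identifiability of the skew-elliptical/hidden-truncation representation itself, as in \cite{pyne09} and \cite{sahu03}) and of the column-permutation ambiguity of $\vecLambda$ is consistent with what is done here. The difference, and the gap, lies in your verification of condition~(ii).

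First, enlarging the latent vector to $(\vecU,W)$ does not leave the mixing distribution parameter-free: $W\sim\text{GIG}(\omega_g,\omega_g,\lambda_g)$, so the ``fixed GIG density on the extra coordinate'' is in fact component-specific, and Theorem~1 requires a single distribution function $G$ shared by all components (its proof pulls the sum over $g$ inside a common $dG(\vecy)$). Second, even granting a Gaussian conditional, identifiability of a finite Gaussian mixture at a \emph{fixed} latent value $\vecu$ (or $(\vecu,w)$) only recovers the multiset of pairs $\left(\vecmu_g+\vecLambda_g\vecu,\ w\vecSigma_g\right)$; it does not separate $\vecmu_g$ from $\vecLambda_g$, and two components with distinct parameters may coincide at that particular $\vecu$. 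What condition~(ii) really requires is identifiability of the finite mixture of $F(\vecx\mid\vecy,\vectheta_g)$ viewed as a family of regressions in the covariate $\vecy$. This is why the paper keeps $W$ integrated out---so that $F(\vecx\mid\vecy,\vectheta_g)$ is a Gaussian, $t$, or symmetric hyperbolic linear regression on $\vecy$ with component-specific shape parameters---and invokes the mixture-of-regressions identifiability results of \cite{galimberti11} and \cite{galimberti14}, with the symmetric hyperbolic error case handled by the same technique. Your argument can be repaired either by substituting these regression-mixture results for the appeal to \cite{holzmann06}, or by showing that identifiability for all $\vecu$ in an open subset of $\mathbb{R}^q_+$ permits continuous matching of components across $\vecu$ and hence recovery of each $\vecLambda_g$; as written, the step is not justified.
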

\noindent The proof of Corollary 1 follows by Theorem~1 and the following facts. Skew-elliptical or hidden truncated elliptical distributions are identifiable mixture distributions \citep[see][]{pyne09, sahu03}. Multivariate linear regression using finite mixtures of Gaussian and $t$ distributions are identifiable \citep[see][]{galimberti11,galimberti14}. Note that multivariate linear regression using finite mixtures of symmetric hyperbolic distributions and/or elliptical distributions can be shown to be identifiable using a similar technique to \cite{galimberti14} and thus is omitted for brevity. 

A section on identifiability would not be complete without some discussion of the well-known label switching problem.  The term label switching is used by \cite{render84} in reference to ``the invariance of the likelihood under relabelling of the mixture components" \citep{stephens00}. As \cite{stephens00} points out, label switching can lead to difficulties when model-based clustering is carried out within the Bayesian paradigm. \cite{yao09}, \cite{celeux00}, and \cite{yao12} also discuss label switching in the Bayesian mixture context. In the present work, the maximum likelihood inferential paradigm is used and label switching has no practical implications and arises only as a theoretical identifiability issue that can usually be resolved by specifying some ordering on the mixing proportions, e.g., $\pi_1>\pi_2>\cdots>\pi_G$. Note that in cases where mixing proportions are equal, a total ordering on other model parameters can be considered.

\section{s-concavity of the Truncated Hyperbolic Distribution}\label{sec:concave}

When using mixtures of non-elliptical distributions for clustering, it is important to understand whether the component contours are convex. If not, it will be possible that one component corresponds to multiple clusters, e.g., an x-shaped component, and this leads to results that are difficult to interpret. Note that this is especially problematic when the data are not sufficiently low-dimensional to visually understand what is happening. In this section, the convexity, or quasi-concavity, of the truncated hyperbolic (TH) distribution is discussed.

A function $f(\vecx)$ is quasi-concave if each upper-level set $$S_\alpha(f) = \{\vecx|f(\vecx) \geq \alpha\}$$ is a convex set. If the density is quasi-concave then it is unimodal. 
\cite{tortora15} point out that the generalized hyperbolic distribution has a quasi-concave density and, if $\lambda > (p+1)/2$, it has a log-concave density. Because the HTH distribution is obtained by integrating out a set of variables from the symmetric hyperbolic distribution, we can show that the HTH distribution is quasi-concave if the symmetric hyperbolic density is $s$-concave or log-concave --- note that, in general, $s\in\mathbb{R}$.  
Next, we will show that the symmetric generalized hyperbolic distribution is an s-concave density. In particular, we will show that the density is $-1/p$-concave when $\lambda \le - p/2 $.
A density, $f$, is s-concave on an open convex set $\mathcal{C} \subset \mathbb{R}^m$ if, for every $\vecx_0$ and $\vecx_1$ in $\mathcal{C}$ and $\alpha \in (0,1)$, we have
\begin{equation*}
f\left\{\alpha \vecx_0 + (1-\alpha)\vecx_1\right\}
\ge \left\{\alpha f\left(\vecx_0 \right)^s  + (1-\alpha)f\left(\vecx_1\right)^s\right\}^{1/s}.
\end{equation*}
For a discussion of s-concave densities, see \cite{dharmadhikari88}.

We begin with the fact that the symmetric hyperbolic distribution is proportional to the composition $f\{g(\vecx)\} = g(\vecx)^\tau K_\tau\{g(\vecx)\}$, where $\tau = \lambda-p/2$ and the function $g(\vecx) = \sqrt{a + b \times \delta\left(\vecx, \vecmu| \vecSigma\right)}$ is convex with $a$ and $b$ both positive. Then, we apply the following theorem. 
\begin{thm}
If $f$ is a non-negative and convex function defined on the convex set $\mathcal{C}$ and h is $s$-concave and monotone decreasing then $h\{f(\vecx)\}$ is a $s$-concave function. 
\end{thm}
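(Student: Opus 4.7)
The plan is to prove the theorem by a direct three-step chain of inequalities, combining convexity of $f$ with monotonicity and $s$-concavity of $h$. Fix $\vecx_0, \vecx_1 \in \mathcal{C}$ and $\alpha \in (0,1)$, and set $y_i = f(\vecx_i)$ for $i = 0,1$, noting that both $y_i$ are non-negative by hypothesis. The goal is to establish
$$
h\bigl\{f\bigl(\alpha \vecx_0 + (1-\alpha)\vecx_1\bigr)\bigr\} \;\geq\; \bigl\{\alpha\, h(f(\vecx_0))^s + (1-\alpha)\, h(f(\vecx_1))^s\bigr\}^{1/s}.
$$

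First, I would invoke convexity of $f$ on $\mathcal{C}$ to obtain
$$
f\bigl(\alpha \vecx_0 + (1-\alpha) \vecx_1\bigr) \;\leq\; \alpha f(\vecx_0) + (1-\alpha) f(\vecx_1).
$$
Because $h$ is monotone decreasing, applying $h$ to both sides reverses the direction of the inequality:
$$
h\bigl\{f\bigl(\alpha \vecx_0 + (1-\alpha)\vecx_1\bigr)\bigr\} \;\geq\; h\bigl(\alpha y_0 + (1-\alpha) y_1\bigr).
$$
Finally, the $s$-concavity of $h$ applied at the points $y_0, y_1$ yields
$$
h\bigl(\alpha y_0 + (1-\alpha) y_1\bigr) \;\geq\; \bigl\{\alpha\, h(y_0)^s + (1-\alpha)\, h(y_1)^s\bigr\}^{1/s}.
$$
Chaining the two displayed inequalities gives exactly the $s$-concavity of $h \circ f$ on $\mathcal{C}$.

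The main subtlety, and where I would take the most care, is to verify that the chain is valid uniformly in the sign of $s$. The definition stated in the paper, $f\{\alpha \vecx_0 + (1-\alpha)\vecx_1\} \geq \{\alpha f(\vecx_0)^s + (1-\alpha)f(\vecx_1)^s\}^{1/s}$, already absorbs the sign of $s$ into the exponent $1/s$, so the conclusion has the same form regardless of whether $s$ is positive, negative, or zero (with the usual interpretation $\{\alpha y_0^0 + (1-\alpha) y_1^0\}^{1/0} := y_0^\alpha y_1^{1-\alpha}$ in the log-concave limit). Thus no separate case analysis on $s$ is needed, as long as $h$ is $s$-concave on a convex set containing the image of $f$, which is the natural implicit domain assumption. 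A secondary point worth a brief remark is that the argument is in fact slightly stronger than stated: $f$ need only be convex and $h$ need only be defined and $s$-concave on a convex set containing $\{f(\vecx) : \vecx \in \mathcal{C}\}$; the hypothesis that $f$ is non-negative is used only to ensure that the image of $f$ lies in the natural domain of the candidate $s$-concave density $h$ that will later be specialized to the Bessel-type expression $g(\vecx)^\tau K_\tau\{g(\vecx)\}$.
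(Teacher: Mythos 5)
Your proof is correct and follows essentially the same route as the paper's: convexity of $f$, then the reversal of the inequality under the monotone decreasing $h$, then $s$-concavity of $h$ applied to the convex combination $\alpha f(\vecx_0) + (1-\alpha)f(\vecx_1)$. Your additional remarks on the sign of $s$ and the implicit domain assumption are sensible clarifications but do not change the argument.
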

\begin{proof}
Let $\vecx_0$ and $\vecx_1 \in \mathcal{C}$ and $\alpha \in (0,1)$, then $f\left\{\alpha \vecx_0 + (1-\alpha)\vecx_1\right\} 
\le  \alpha f\left(  \vecx_0 \right)  + (1-\alpha) f\left(\vecx_1\right) $. This implies 
\begin{eqnarray*}
f\left\{\alpha \vecx_0 + (1-\alpha)\vecx_1\right\} 
&\le&  \alpha f\left(  \vecx_0 \right)  + (1-\alpha) f\left(\vecx_1\right) \\ 
h\{f\left( \alpha \vecx_0 + (1-\alpha)\vecx_1\right)\}
&\ge&  h\left\{\alpha f\left(  \vecx_0 \right)  + (1-\alpha) f\left(\vecx_1\right)\right\}\\
h\{f\left( \alpha \vecx_0 + (1-\alpha)\vecx_1\right)\} &\ge&  \left[\alpha h\{f\left(  \vecx_0 \right)\}^s  + (1-\alpha)h\{f\left(\vecx_1\right)\}^s  \right]^{1/s}. 
\end{eqnarray*}
\end{proof}
A density $f$ is s-concave for $s \in (0,\infty)$ if and only if $f^s$ is convex. 
Then we are required to show that the function, $f,$ is s-concave. The function $f(u) = u^{\tau} K_{\tau}(u)$ is s-concave for $s \in (0,\infty)$ if and only if the function $q(u)= f(u)^s$ is convex where $s <0$.
The first derivative of $q$ can be written as 
\begin{equation*} 
q'(u)  = s f(u)^{s-1} f'(u) 
= s u^{\tau s } K_{\tau}(u)^{s}
\left\{\frac{\tau}{u} +  \frac{ K'_{\tau}(u)  }{ K_{\tau}(u)}\right\} 
 = s q(u)
\left\{\frac{\tau}{u} +  \frac{ K'_{\tau}(u)  }{ K_{\tau}(u)}\right\} 
\end{equation*} 
and second derivative of $q$ has the relation 
\begin{equation*} 
 \frac{u^2}{s q(u) } q''(u) =  (s+1) \tau^2  +u^2 -  \tau+ (2 \tau s -1)   u \frac{ K'_{\tau}(u)  }{ K_{\tau}(u) }  +  (s-1)  \left\{\frac{ u  K'_{\tau}(u) }{ K_{\tau}(u)}\right\}^2. 
\end{equation*} 
From \cite{baricz09}, we have  the inequality $u K_\tau'(u)/K_\tau(u)  < -\sqrt{u^2 + \tau^2}$, which holds for all $u >0$ and $\tau \in \mathbb{R}$; applying this inequality gives the following bound on the second derivative when $\tau s - 1 \le 0$: 
\begin{equation*} 
 \frac{u^2}{s q(u) } q''(u) <   \tau( \tau s  - 1) + ( \tau s -1)  \left\{u\frac{ K'_{\tau}(u)}{K_{\tau}(u)}\right\}  \le 0.
\end{equation*}
This implies that $s \le  1/\tau $, which means that we need $\lambda \le -p/2$ for the symmetric hyperbolic density to be $-1/p$-concave. 

{\color{black}The HTH is formed by integrating out $q$ latent variables from a symmetric hyperbolic distribution with $p + q$ variables, as shown in Section~\ref{sec:hrep}.} We have two cases:
\begin{itemize}
\item If $\lambda < -(p+q)/2$, the symmetric hyperbolic density is $-1/(p+q)$-concave. Then, by Theorem~3.21 in \cite{dharmadhikari88}, we have that the truncated hyperbolic distribution is $s^*$-concave, where
$$ s^* = \frac{s}{1+ps} = \frac{s}{1+ps}  =  \frac{-1/(p+q) }{1-q/(p+q)} =   \frac{-1 }{(p+q)-q} = - \frac{1}{p}.$$ 
\item If $\lambda > (p+q+1)/2$, the symmetric hyperbolic density is log-concave which implies the truncated hyperbolic distribution is also log-concave. 
\end{itemize}
For $-(p+q)/2 < \lambda < (p+q+1)/2$, concavity cannot be proven. However, as of yet we have not been able to identify any parameter set that leads to non-concave density contours. Refer to the appendix for examples of contours generated for $\lambda$ in this range. 

\section{Parameter Estimation}\label{sec:param}

An expectation-conditional maximization (ECM) algorithm \citep{meng93} is developed for parameter estimation for the mixture of HTH distributions. All maximum likelihood estimates for the model parameters are derived from the complete-data log-likelihood 
\begin{equation*}
\begin{split}
l_c&(\vecvarthet)
=  \sum^{G}_{g=1}\sum^{n}_{i=1}z_{ig} \bigg[\ln\pi_g-\frac{1}{2}\ln|\vecSigma_g|-\ln K_{\lambda_g}(\omega_g)+\{\lambda_g-(p+q)/2-1\}\ln w_{ig}\\& \ \ -\frac{1}{2w_{ig}}\Big\{\omega_g w_{ig}^2+\omega_g+(\vecx_i-\vecmu_g-\vecLambda_g\vecu_{ig})^{\top}\vecSigma_g^{-1}(\vecx_i-\vecmu_g\vecLambda_g\vecu_{ig})+\vecu_{ig}^{\top}\vecu_{ig}\Big\}\bigg]+C,
\end{split} 
\end{equation*}
where $C$ is a constant with respect to the model parameters and $z_{ig}$ denotes component membership so that $z_{ig}=1$ if observation $i$ belongs to component $g$ and $z_{ig}=0$ otherwise. The algorithm alternates between the following two steps.  

\subsubsection*{E-step} 

Let $R_\lambda(\omega)=K_{\lambda+1}(\omega)/K_\lambda(\omega)$. To obtain the maximum likelihood estimates of the model parameters, we require the expectations $\textrm{E}\left(W_{ig}|\vecx_i, z_{ig}=1\right)\equalscolon a_{ig}$, $\textrm{E}\left({1}/{W_{ig}}|\vecx_i,z_{ig}=1\right)\equalscolon b_{ig}$, and $\textrm{E}(\ln(W_{ig})|\vecx_i,z_{ig}=1)\equalscolon c_{ig}$. A method for estimating  $\textrm{E}\{\ln W_{ig}|\vecx_i,z_{ig}=1\}$ via series expansions is detailed in Supplementary Material S.5, and additional information pertaining to the E-step calculations is given in Supplementary Material~S.4.
In addition, we need $\textrm{E}\{(1/W_{ig})\vecU_{ig}\mid\vecx_i,z_{ig}=1\}\equalscolon\vecd_{ig}$ and $\textrm{E}\{(1/W_{ig})\vecU_{ig}\vecU_{ig}^{\top}\mid\vecx_i,z_{ig}=1\}\equalscolon\vecE_{ig}.$ Note that it is convenient to use the relationships $$\textrm{E}\left\{(1/W_{ig})\vecU_{ig} \mid \vecx_i,z_{ig}=1\right\}=\textrm{E}\left\{(1/W_{ig}) \mid \vecx_i,z_{ig}=1\right\}\textrm{E}\left(\vecS_{ig} \mid \vecx_i,z_{ig}=1\right)$$ and $$\textrm{E}\left\{(1/W_{ig})\vecU_{ig}\vecU_{ig}^{\top} \mid \vecx_i,z_{ig}=1\right\}=\textrm{E}\left\{(1/W_{ig}) \mid \vecx_i,z_{ig}=1\right\}\textrm{E}\left(\vecS_{ig}\vecS_{ig}^{\top} \mid \vecx_i,z_{ig}=1\right),$$ where 
\begin{equation}
\begin{split}
\vecS_{ig}~|~\vecx_i \sim& \text{TH}_q\bigg(\vecLambda_g^{\top}\vecOmega_g^{-1}(\vecx_i-\vecmu_g), \sqrt{1+\vecdelta(\vecx_i|\vecmu_g,\vecOmega_g)/\omega}\vecDelta_g, \lambda_g-p/2-1,\\
& \qquad\qquad\omega_g\sqrt{1+\vecdelta(\vecx_i|\vecmu_g,\vecOmega_g)/\omega_g}, \omega_g\sqrt{1+\vecdelta(\vecx_i|\vecmu_g,\vecOmega_g)/\omega_g}~\bigg|~\mathbb{R}_+^q\bigg).
\end{split}
\label{eq:UX}
\end{equation}
See Supplementary Material S.2 for the proof of this result. 
At each E-step, we update the values of $a_{ig}$, $b_{ig}$, $c_{ig}$, $\vecd_{ig}$, and $\vecE_{ig}$. The component membership indicator variable $Z_{ig}$ is updated by
$$\textrm{E}(Z_{ig}|\vecx_i) = \frac{\pi_g f_{\text{HTH}}(\vecx_i|\vecmu_g,\vecSigma_g,\vecLambda_g,\lambda_g,\omega_g,\omega_g)}{\sum^{G}_{h=1}\pi_hf_{\text{HTH}}(\vecx_i|\vecmu_h,\vecSigma_h,\vecLambda_h,\lambda_h,\omega_h,\omega_h)}\equalscolon\hat{z}_{ig}.$$
As usual, all expectations are conditional on the current parameter estimates.

\subsubsection*{M-step}

At each M-step, we update the model parameters sequentially and conditionally on each other. The $g${th} location parameter is updated by
\begin{equation}
\vecmu_{g}=\frac{\sum^{n}_{i=1}\hat{z}_{ig}b_{ig}\vecx_i-\vecLambda_g\sum^{n}_{i=1}\hat{z}_{ig}\vecd_{ig}}{\sum^{n}_{i=1}\hat{z}_{ig}b_{ig}},
\end{equation}
and we update the $g${th} skewness matrix by $$\vecLambda_g=\vecM_{2g}^{\top}\vecM_{1g}^{-1},$$
where $\vecM_{1g}=\sum^{n}_{i=1}\hat{z}_{ig}\vecE_{ig}$, and $\vecM_{2g}=\sum^n_{i=1}\hat{z}_{ig}\vecd_{ig}(\vecx_i-\vecmu_g)^{\top}$. 
The updates for $\omega_g$ and $\lambda_g$ are
$$\omega^{\new}_g= \omega_g -  \frac{  \partial_{\omega}  t }{ \partial^2_{\omega}  t }\bigg|_{\omega=\omega_g}\qquad
\text{and} \qquad\lambda_g^{\new}=  \overline{c}_g \lambda_g  \left\{ \frac{ \partial }{ \partial \lambda }  \ln K_{\lambda}\left(\omega_g\right) \bigg|_{\lambda=\lambda_g} \right\}^{-1},$$
respectively, where 
$$t_g(\omega_g,\lambda_g) = -\ln K_{\lambda_g} \left(\omega_g \right) + (\lambda_g-1)\overline{c}_g - \frac{\omega_g}{2} \left( \overline{a}_g + \overline{b}_g \right),$$
$\overline{a}_g=\sum^{n}_{i=1}\hat{z}_{ig}a_{ig}/n_g$, $\overline{b}_g=\sum^{n}_{i=1}\hat{z}_{ig}b_{ig}/n_g$, and $\overline{c}_g=\sum^{n}_{i=1}\hat{z}_{ig}c_{ig}/n_g$. Finally, the covariance parameter $\vecSigma_g$ is updated by 
\begin{equation}\label{eqn:sig}
\vecSigma_g=\frac{1}{n_g}\left\{\sum^{n}_{i=1}\hat{z}_{ig}b_{ig}(\vecx_i-\vecmu_g)(\vecx_i-\vecmu_g)^{\top} +\vecLambda_g \vecM_1\vecLambda_g^{\top} -\vecM_2^{\top}\vecLambda_g^{\top}-\vecLambda_g \vecM_2 \right\}.\end{equation}
Note that, by Jensen's inequality, $$\vecM_{1g}\succeq \sum^{n}_{i=1}z_{ig}b_{ig}\textrm{E}(\vecS_{ig}~|~\vecx_{ig},z_{ig}=1)\textrm{E}(\vecS_{ig}~|~\vecx_{ig},z_{ig}=1)^{\top},$$
{\color{black}where $\mathbf{A} \succeq \mathbf{B}$ means that $\mathbf{A} - \mathbf{B}$ is positive semi-definite.}  
Now if we replace $\vecM_{1g}$ by $ \sum^{n}_{i=1}z_{ig}b_{ig}\textrm{E}(\vecS_{ig}~|~\vecx_{ig},z_{ig}=1)\textrm{E}(\vecS_{ig}~|~\vecx_{ig},z_{ig}=1)^{\top}$ in \eqref{eqn:sig}, we get 
\begin{equation*}\begin{split}
\vecSigma_g^*=\frac{1}{n_g}&\sum^{n}_{i=1}\hat{z}_{ig}b_{ig}\left\{\vecx_i-\vecmu_g-\frac{1}{b_{ig}}\vecLambda_g\textrm{E}(\vecS_{ig}~|~\vecx_{ig},z_{ig}=1)\right\}\left\{\vecx_i-\vecmu_g-\frac{1}{b_{ig}}\vecLambda_g\textrm{E}(\vecS_{ig}~|~\vecx_{ig},z_{ig}=1])\right\}^{\top}\end{split}\end{equation*}
and it follows that $\vecSigma_g \succeq \vecSigma_g^* \succeq 0$. Therefore, $\vecSigma_g$ is positive semi-definite.

\section{Illustrations}\label{sec:results}

\subsection{Unboundedness of the Likelihood}

While a class of mixture densities may have an unbounded likelihood, a sequence of roots of the likelihood equation with the properties of consistency, efficiency, and asymptotic normality may still exist \citep[see][]{mclachlan00c}. When a class of mixtures is identifiable, regularity conditions can be given that ensure that this is the case \citep[see][]{render84}. As pointed out by \cite[][Section~2.5]{mclachlan00c}, these conditions are essentially multivariate analogues of the conditions given by \cite{cramer46} and so they should hold for many parametric families provided they are identifiable. Whether the HTH mixtures are identifiable, however, was an open question until we proved its identifiability here. To avoid degenerate solutions arising due to the unboundedness of the likelihood, we choose the root associated with the largest local maximum in our analyses. Note that other approaches exist to avoid spurious solutions. \cite{hathaway85} constrains the ratio of the smallest and largest variance parameters among the components in the case of a univariate normal distribution. \cite{chen08} use a penalized maximum likelihood estimator, and \cite{yao10} uses a profile likelihood approach.

\subsection{Starting Values and Model Selection}

For the data analyses, the group memberships are initialized using $k$-means clustering results. 
We initialize $\vecmu_g$ and $\vecSigma_g$ by the weighted mean and covariance matrix, respectively. We initialize $\vecLambda_g$ with values randomly generated from a normal distribution with mean 0 and standard deviation~1, and $\omega_g$ and $\lambda_g$ are initialized as 1. 


The Bayesian information criterion \citep[BIC;][]{schwarz78} is used to select the number of components $G$ and the value of $q$. The BIC is given by 
$\text{BIC}=2l(\vecx,\hat{\vecvarthet})- \rho\ln n$,
where $l(\vecx,\hat{\vecvarthet})$ is the maximized log-likelihood, $\hat{\vecvarthet}$ is the maximum likelihood estimate of the model parameters $\vecvarthet$, $\rho$ is the number of free parameters in the model, and $n$ is the number of observations.  The BIC is commonly used for model selection in model-based clustering  with support given by \citep{campbell97,fraley02,leroux92}. 

\subsection{Simulation Studies}

The purpose of the first simulation study is to demonstrate the difference in fit between the HTHu and HTHm distributions. For this illustration, we initialized the models as described above and used a deterministic annealing algorithm \citep{ueda98,zhou09} to help overcome the issue of selecting good starting values. The deterministic annealing method flattens the likelihood surface, returning it to its original shape over several iterations of the ECM algorithm. This helps to avoid convergence to a local maxima rather than the global maximum. To implement deterministic annealing, the update for the group membership labels takes the form 
$$\textrm{E}(Z_{ig}~|~\vecx_i) = \frac{\{\pi_g f(\vecx_i~|~\vectheta_g)\}^d}{\sum^{G}_{h=1}\{\pi_hf(\vecx_i~|~\vectheta_h)\}^d},$$
where $d$ is increased at each iteration following a sequence of values from $d=0$ to $d=1$. 

First, data are simulated from a $G=1$ component mixture model with $n=250$, $p=2$, and $q=1$. 
A HTH distributions is then fitted to the simulated data for $q=1$ (HTHu) and $q=2$ (HTHm), respectively. From the contour plots (Figure~\ref{fig:contour}), we see that both the HTHu and HTHm models obtain a very good fit to the data. Although, the contours of the HTHu model are softer compared to those of the HTHm model, both models capture the skewness well.
\begin{figure}[!h]
\centering
  \includegraphics[width=0.48\linewidth]{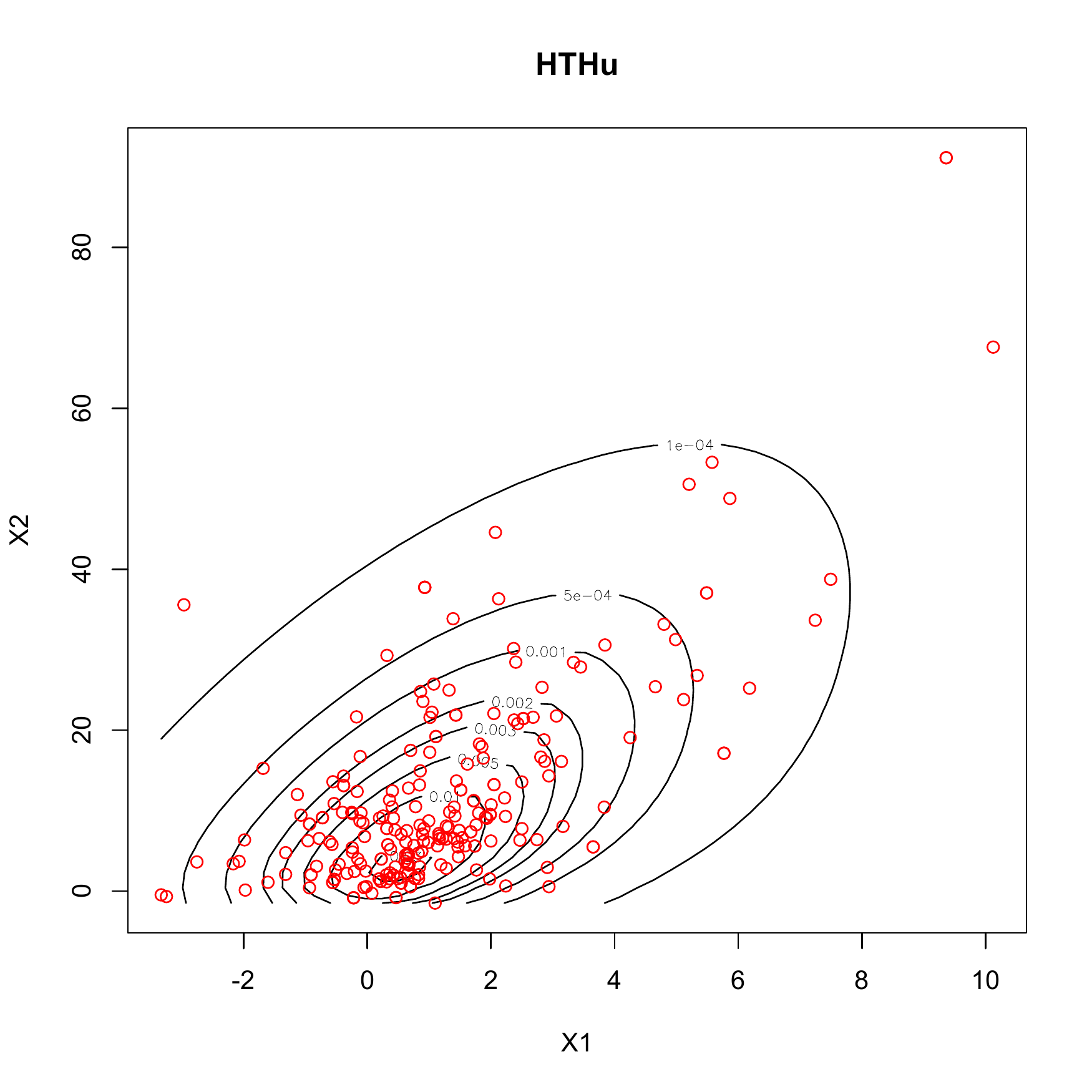} \quad
  \includegraphics[width=0.48\linewidth]{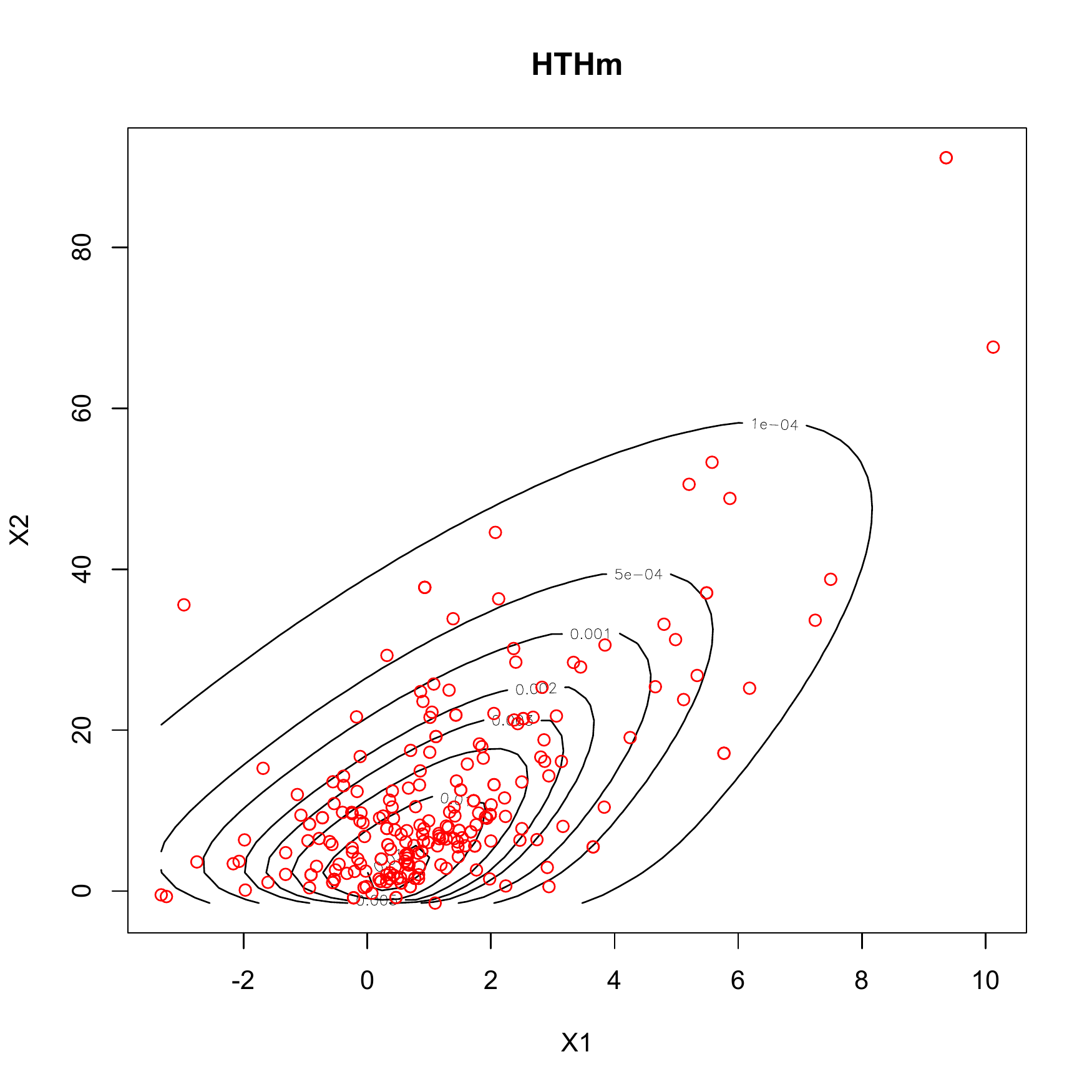}
  \caption{Contour plots for the HTHu and HTHm mixture models on the simulated data with $q=1$.}
  \label{fig:contour}
\end{figure}

Next, data is simulated from a $G=1$ component mixture model with $n=250$, $p=2$, and $q=2$. Again HTHu and HTHm distributions are fitted to these data. Looking at the resulting contour plots (Figure~\ref{fig:contour2}), we see a more drastic difference between the HTHu and HTHm models in the shapes of the contours. The HTHu model appears to have difficulty capturing the skewness in this instance and the contours take on a rounder shape. Given that the HTHm model has additional parameters in the skewness matrix, it may be expected that this model would be better able to capture the skewness in data simulated from a HTHu distribution than the HTHu model when fit to data simulated from a HTHm distribution. 
\begin{figure}[!ht]
\centering
  \includegraphics[width=0.48\linewidth]{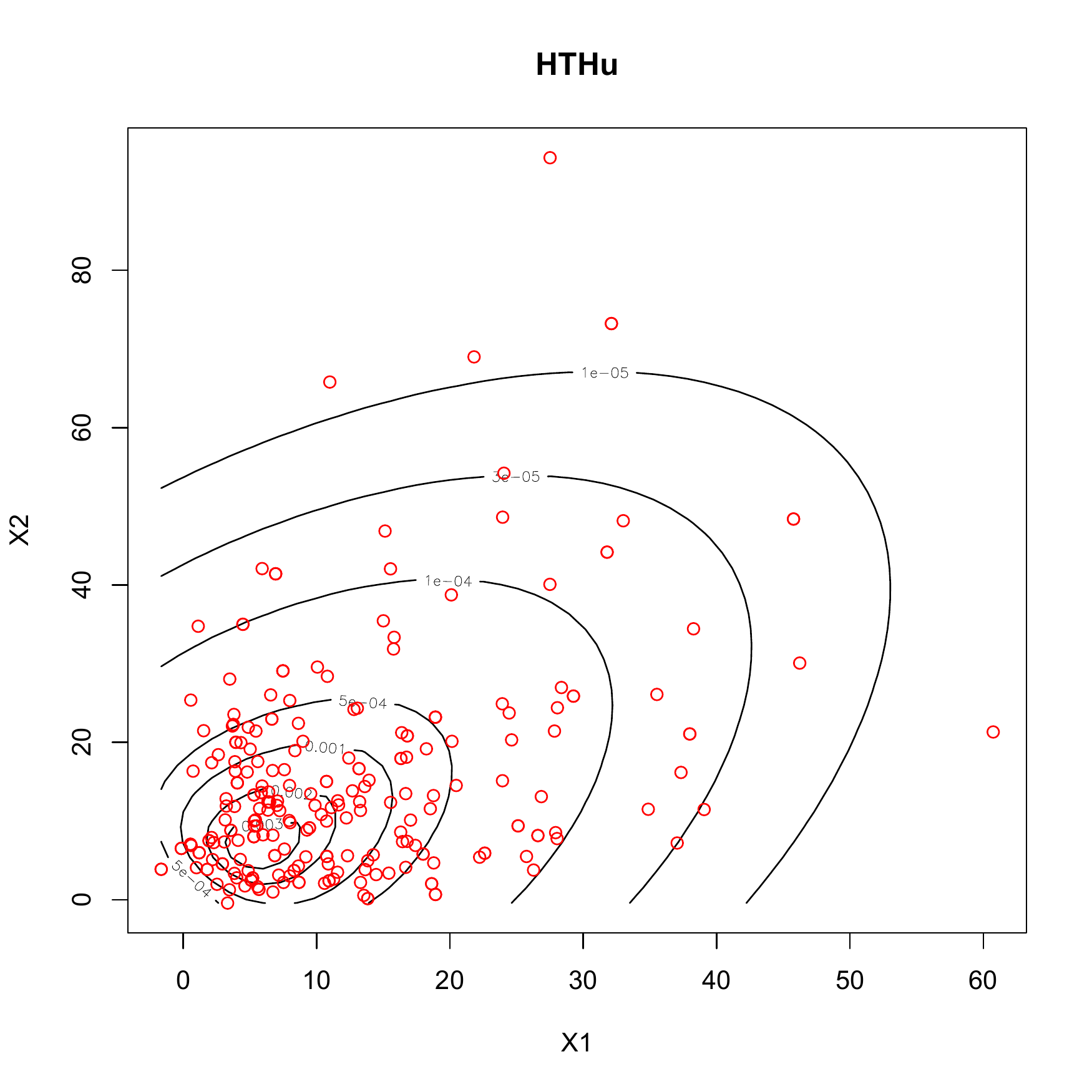} \quad
  \includegraphics[width=0.48\linewidth]{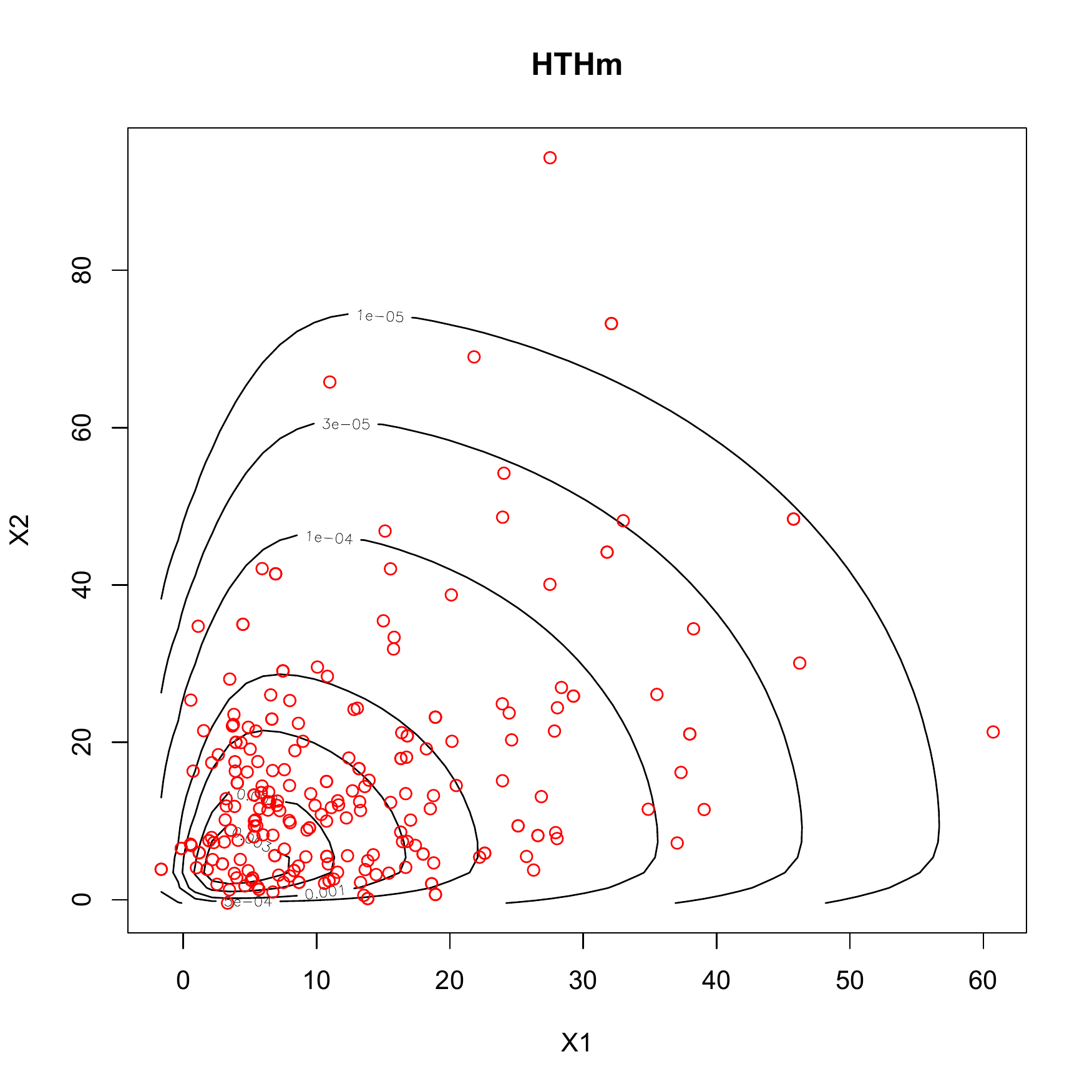}
  \caption{Contour plots for the HTHu and HTHm mixture models on the simulated data with $q=2$.}
  \label{fig:contour2}
\end{figure}

A second simulation study, given in Supplementary Material~S.6, illustrates that the HTHu and HTHm mixtures perform well in terms of parameter estimation and clustering when data are generated from skew-$t$ or HTH distributions. From these simulations, we also see that the HTHu and HTHm mixtures perform comparably to the SDB and classical skew-$t$ mixtures in this regard.

\subsection{Data Analyses}

The performance of our HTH mixture models is assessed on two real data sets. 
Although use no knowledge of the true class labels in these analyses, they are known and so the adjusted Rand index \citep[ARI;][]{hubert85} can be used to compare the true and predicted classifications. Note that an ARI value of 1 corresponds to perfect agreement between the two sets of labels, the expected value of the ARI under random classification is zero, and a negative ARI value indicates classification that is worse than would be expected by chance.

The hematopoietic stem cell transplant (HSCT) data was collected by the British Columbia Cancer Agency and contains measurements on 9,780 cells obtained via flow cytometry. Note that 78 cells were identified as ``dead" and were removed prior to analysis. In the remaining 9,702 cells, four clusters were detected by manual expert clustering. 
\cite{charytanowicz10} report measurements on 210 kernels from combine-harvested wheat grains collected from experimental fields. The kernels belong to three varieties of wheat: Kama, Rosa, and Canadian. Measurements were collected using a soft X-ray visualization technique to study the internal structure of the kernels. These data are available via the UCI Machine Learning Repository and is hereafter referred to as the seeds data. We consider a subset of three variables, compactness of kernel, length of kernel, and length of kernel groove, and attempt to cluster the kernels based on variety. \cite{lee13c}
used the HSCT and seeds data sets to illustrate the clustering ability of the SDB and classical skew-$t$ mixture approaches. 
%

We compare our results to those obtained by fitting a classical skew-$t$ mixture model and an SDB skew-$t$ mixture model using the {\tt EMMIXskew} \citep{wang13} and {\tt EMMIXuskew}  \citep{lee13a} packages, respectively, for {\sf R}. In all cases, $k$-means starts are used {\color{black} and data are scaled prior to analysis}. 
%
The purpose of these analyses is to compare the clustering ability of the models introduced herein to that of the classical and SDB skew-$t$ mixture models; the SDB skew-$t$ mixture model is regarded by some as the state of the art approach (see \cite{lee13c}). The best way to do a direct comparison of clustering ability is to take the issue of selection of the number of components ``out of the equation" so to speak and so we $G$ equal to the number of classes in these analyses. Where applicable, the BIC is used to select $q$. 

The results (Table~\ref{tab:paramEst}) show that the HTHu and HTHm mixture models outperform both the classical and SDB skew-$t$ mixtures for both data sets. This is crucial when one considers that the HSCT and seeds data sets were used by \cite{lee13c} to illustrate the excellent clustering performance of the SDB and classical skew-$t$ mixture approaches. For both data sets, the the HTHu and HTHm mixture models give similarly excellent clustering performance. Notably, the classical skew-$t$ mixture approach gives much better clustering performance than the SDB skew-$t$ mixtures for the seeds data; however, the SDB skew-$t$ mixtures performs better than the classical skew-$t$ mixture approach for the HSCT data. This supports the view of \cite{azzalini14} that neither one of these skew-$t$ formulations should be considered superior to the other. 
\begin{table}[ht]
\caption{ARI values for the four mixture models fitted to the HSCT and seeds data sets.}
\centering
\begin{tabular*}{1.0\textwidth}{@{\extracolsep{\fill}}lrrrr}
\hline
 & HTHu mix.& HTHm mix.& Classical skew-$t$ mix.& SDB skew-$t$ mix.\\ 
\hline
HSCT&0.976&0.984&0.782&0.890 \\
Seeds &0.877&0.877&{\color{black}0.836}& 0.009\\
\hline
\end{tabular*}
\label{tab:paramEst}
\end{table}

\normalsize

%
%

\section{Conclusion}

The HTH distribution was introduced.  This distribution models skewness via a $p \times q$ skewness matrix where $p$ is the dimension of the data and $1\leq q \leq p$.  In this way, the HTH distribution encapsulates both HTHu and HTHm forms of the hyperbolic distribution --- as well as some formulations of the skew-$t$ and skew-normal distributions --- as special cases. We proved identifiability, discussed convexity, and derived the first two moments of the truncated symmetric hyperbolic distribution.  

In a true clustering problem it is desirable to model the data using a flexible distribution that approximates other distributions as special or limiting cases. In this way, it is advantageous to avoid making unnecessary and invalid assumptions regarding the distribution of the data. In this paper, we demonstrate excellent clustering results on two real data sets using the HTHu and HTHm mixture approaches, respectively. 
In the fitting of a mixture of HTH distributions, we are required to compute several integrals on each iteration of our algorithm. Naturally, this can be quite computationally burdensome, particularly in high dimensions. Currently, we have implemented our ECM algorithms in serial $\sf{R}$; future work will focus on developing a parallel implementation. Further work will also focus, \textit{inter alia}, on obtaining information-based standard errors for $\load$ --- perhaps in an analogous fashion to \cite{wang16}.

\section*{Acknowledgements}{\small 
The authors gratefully acknowledge the support of an Ontario Graduate Scholarship (Murray) as well as respective Discovery Grants from the Natural Sciences and Engineering Research Council of Canada (Browne, McNicholas) and the Canada Research Chairs program (McNicholas).}

{\small

}

\appendix

\section{Convexity of the HTH Distribution}
 
From Section \ref{sec:concave}, convexity of the HTH distribution cannot be proven for $-(p+q)/2 < \lambda < (p+q+1)/2$. In this appendix, we give examples of contours generated for $\lambda$ in this range. Note that when there is no skewness, the contours will be elliptical and thus convex. As of yet, we have not be able to generate a situation with  $-(p+q)/2 < \lambda < (p+q+1)/2$ where the contours are not convex.
For example, we consider data generated with $p=2$, $q=1$ and 
\begin{alignat*}{4}
\vecmu & = \begin{pmatrix} 1 \\ 1 \end{pmatrix} & \qquad \vecSigma & = \begin{pmatrix} 1.5 & 0.3 \\ 0.3 & 2 \end{pmatrix} & \qquad  &\omega= 2 & \qquad 
\vecLambda & = \begin{pmatrix} 9 \\ -5 \end{pmatrix}.
\end{alignat*}
We plot the contours of the distribution for $-2<\lambda<2$ (Figure \ref{fig:HTHuConvex}). Similarly, we consider data generated with $p=2$, $q=2$ and 
\begin{alignat*}{4}
\vecmu & = \begin{pmatrix} 1 \\ 1 \end{pmatrix} & \qquad \vecSigma & = \begin{pmatrix} 1.5 & 0.3 \\ 0.3 & 2 \end{pmatrix} & \qquad  &\omega= 2 & \qquad 
\vecLambda & = \begin{pmatrix} -1 & 9 \\ 3 & 9 \end{pmatrix}. 
\end{alignat*}
The contours of the distribution are plotted for $-2<\lambda<2$ (Figure \ref{fig:HTHmConvex}). In both examples that all of the contours are convex. In our experience, this is typical.
\begin{figure}[!ht]
\centering
  \includegraphics[width=\linewidth]{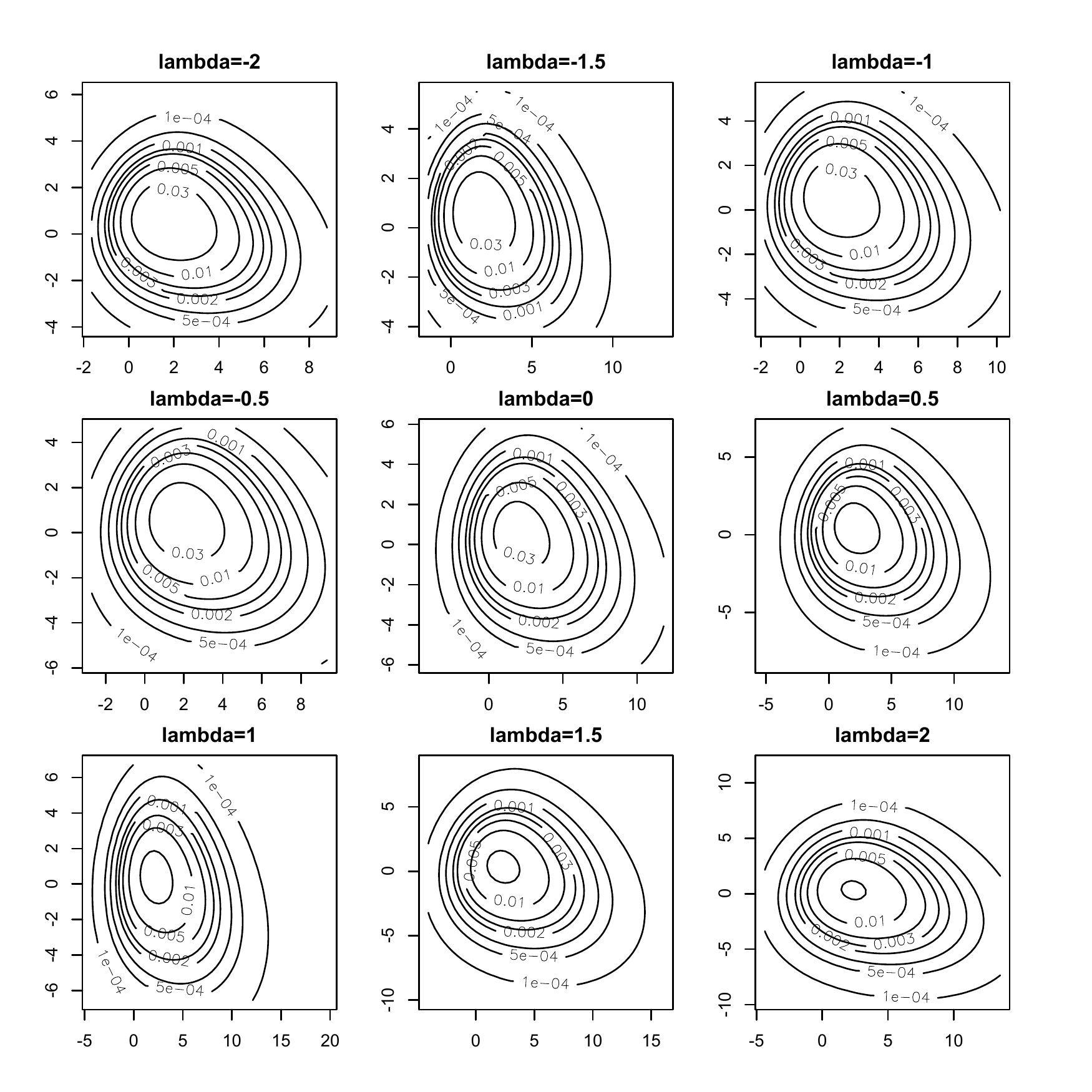} 
  \vspace{-0.3in}
  \caption{Contour plots for the HTHu distribution for several values of $\lambda$.}
  \label{fig:HTHuConvex}
\end{figure}
\begin{figure}[!ht]
\centering
  \includegraphics[width=\linewidth]{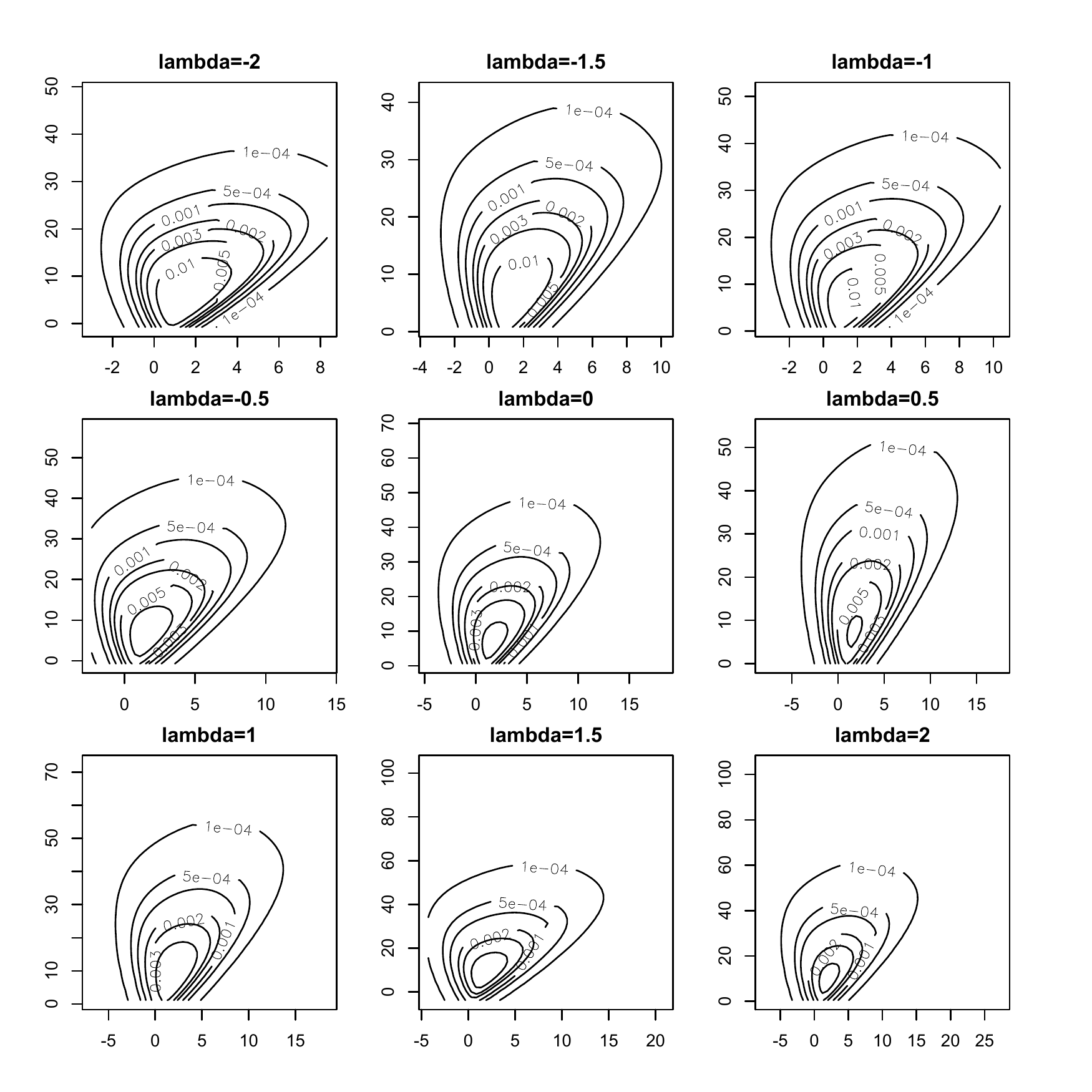} 
  \vspace{-0.3in}
  \caption{Contour plots for the HTHm distribution for several values of $\lambda$.}
  \label{fig:HTHmConvex}
\end{figure}


\end{document}